\newcommand{\node}[1][white]{
\begin{tikzpicture}
\tikzstyle{noeud}=[draw,circle,fill=#1,scale=0.7]
\node[noeud] (0) at ({0},{0}) {};
\end{tikzpicture}
}
\tikzset{->-/.style={decoration={
  markings,
  mark=at position .5 with {\arrow{>}}},postaction={decorate}}}
\tikzset{
    rubberduck/.style={
        draw=black,
        shape=isosceles triangle,
        fill=white,
        minimum height=1.5cm,
        minimum width=0.5cm,
        shape border rotate=#1,
        isosceles triangle stretches,
        inner sep=0pt,
    },
    rubber/.style={rubberduck=+90}}
 \DeclareMathOperator{\attr}{\mathcal{A}}
\DeclareMathOperator{\id}{\text{Id}}
\DeclareMathOperator{\isom}{\text{Isom}}
\DeclareMathOperator{\enc}{\text{Cipher}}
\DeclareMathOperator{\parent}{\mathcal{P}}
\DeclareMathOperator{\children}{\mathcal{C}}
\DeclareMathOperator{\depth}{\mathcal{D}}
\DeclareMathOperator{\leaves}{\mathcal{L}}
\DeclareMathOperator{\bags}{\mathbb{B}}
\DeclareMathOperator{\collections}{\mathbb{C}}
\newcommand\tf[1]{\xrightarrow{#1}}
\colorlet{lblue}{blue!50!white}
\colorlet{lred}{red!50!white}
\colorlet{lgreen}{green!50!white}
\colorlet{lpurple}{purple!50!white}
\colorlet{lorange}{orange!50!white}
\colorlet{lpink}{pink!50!white}
\colorlet{lbrown}{brown!50!white}
\colorlet{lyellow}{yellow!50!white}
\colorlet{lolive}{olive!50!white}
\spnewtheorem{deducrule}{Deduction Rule}{\bfseries}{\itshape}
\spnewtheorem{deducphase}{Deduction Phase}{\bfseries}{\itshape}
\begin{document}

\title{Isomorphic unordered labeled trees \\ up to substitution ciphering\thanks{Supported by European Union H2020 project ROMI.}}
%
%

\author{Florian Ingels\ \and
Romain Aza\"is}
\authorrunning{F. Ingels et R. Aza\"is}
%
\institute{Laboratoire Reproduction et D\'eveloppement des Plantes, Univ Lyon, ENS de Lyon, UCB Lyon 1, CNRS, INRAE, Inria, F-69342, Lyon, France\\
\email{\{florian.ingels,romain.azais\}@inria.fr}}
\maketitle              
\begin{abstract}
Given two messages -- as linear sequences of letters, it is immediate to determine whether one can be transformed into the other by simple substitution cipher of the letters. On the other hand, if the letters are carried as labels on nodes of topologically isomorphic unordered trees, determining if a substitution exists is referred to as marked tree isomorphism problem in the literature and has been show to be as hard as graph isomorphism. While the left-to-right direction provides the cipher of letters in the case of linear messages, if the messages are carried by unordered trees, the cipher is given by a tree isomorphism. The number of isomorphisms between two trees is roughly exponential in the size of the trees, which makes the problem of finding a cipher difficult by exhaustive search. This paper presents a method that aims to break the combinatorics of the isomorphisms search space. We show that in a linear time (in the size of the trees), we reduce the cardinality of this space by an exponential factor on average.

\keywords{Labeled Unordered Trees  \and Tree Isomorphism \and Substitution cipher.}\\

\emph{This paper is eligible for best student paper award.}

\end{abstract}

\section{Introduction}
A \emph{simple substitution cipher} is a method of encryption that transforms a sequence of letters, replacing each letter from the original message by another letter, not necessarily taken from the same alphabet \cite{gardner1984codes}. 

Assume you have at your disposal two messages of the same length, and you want to determine if there exists a substitution cipher that transforms one message onto the other. This question is easily solved, as the cipher is induced by the order of letters. One letter after the other, you can build the cipher by mapping them, until (i) either you arrive at the end of the message, and the answer is Yes, (ii) either you detect an inconsistency in the mapping and the answer is No. Actually, this procedure induces an equivalence relation on messages of the same length: two messages are equivalent (isomorphic) if and only if there is a cipher that transforms one message onto the other. See Fig.~\ref{fig:sequence_cipher} for an illustration.

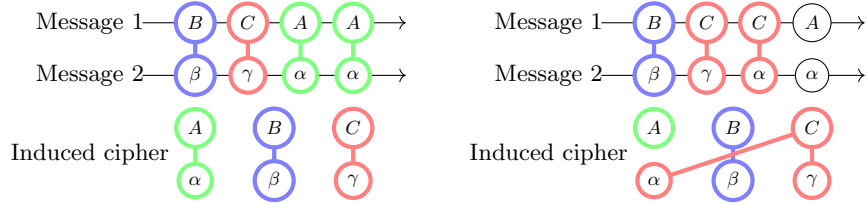
\begin{figure}[h!]
\centering

\begin{minipage}{0.5\textwidth}
\begin{tikzpicture}[xscale=0.7,yscale=0.7]
\tikzstyle{noeud}=[draw,circle,fill=white,scale=0.8]

\node at (-2,0) {Message 1};
\node at (-2,-1) {Message 2};
\node at (-2,-2.5) {Induced cipher};

\draw[->] (-1,0)--(4,0);
\draw[->] (-1,-1)--(4,-1);

\node[noeud,draw=lblue, ultra thick] (0) at (0,0) {$B$};
\node[noeud,draw=lred, ultra thick] (1) at (1,0) {$C$};
\node[noeud,draw=lgreen, ultra thick] (2) at (2,0) {$A$};
\node[noeud,draw=lgreen, ultra thick] (3) at (3,0) {$A$};

\node[noeud,draw=lblue, ultra thick] (0b) at (0,-1) {$\beta$};
\node[noeud,draw=lred, ultra thick] (1b) at (1,-1) {$\gamma$};
\node[noeud,draw=lgreen, ultra thick] (2b) at (2,-1) {$\alpha$};
\node[noeud,draw=lgreen, ultra thick] (3b) at (3,-1) {$\alpha$};

\tikzstyle{arc}=[-,>=latex]

\draw[arc,draw=lblue, ultra thick] (0)--(0b);
\draw[arc,draw=lred, ultra thick] (1)--(1b);
\draw[arc,draw=lgreen, ultra thick] (2)--(2b);
\draw[arc,draw=lgreen, ultra thick] (3)--(3b);

\node[noeud,draw=lgreen, ultra thick] (A) at (0,-2) {$A$};
\node[noeud,draw=lblue, ultra thick] (B) at (1.5,-2) {$B$};
\node[noeud,draw=lred, ultra thick] (C) at (3,-2) {$C$};

\node[noeud,draw=lgreen, ultra thick] (a) at (0,-3) {$\alpha$};
\node[noeud,draw=lblue, ultra thick] (b) at (1.5,-3) {$\beta$};
\node[noeud,draw=lred, ultra thick] (c) at (3,-3) {$\gamma$};

\draw[arc,draw=lblue, ultra thick] (B)--(b);
\draw[arc,draw=lgreen, ultra thick] (A)--(a);
\draw[arc,draw=lred, ultra thick] (C)--(c);

\end{tikzpicture}
\end{minipage}\hfill
\begin{minipage}{0.5\textwidth}
\begin{tikzpicture}[xscale=0.7,yscale=0.7]
\tikzstyle{noeud}=[draw,circle,fill=white,scale=0.8]

\node at (-2,0) {Message 1};
\node at (-2,-1) {Message 2};
\node at (-2,-2.5) {Induced cipher};

\draw[->] (-1,0)--(4,0);
\draw[->] (-1,-1)--(4,-1);

\node[noeud,draw=lblue, ultra thick] (0) at (0,0) {$B$};
\node[noeud,draw=lred, ultra thick] (1) at (1,0) {$C$};
\node[noeud,draw=lred, ultra thick] (2) at (2,0) {$C$};
\node[noeud] (3) at (3,0) {$A$};

\node[noeud,draw=lblue, ultra thick] (0b) at (0,-1) {$\beta$};
\node[noeud,draw=lred, ultra thick] (1b) at (1,-1) {$\gamma$};
\node[noeud,draw=lred, ultra thick] (2b) at (2,-1) {$\alpha$};
\node[noeud] (3b) at (3,-1) {$\alpha$};

\tikzstyle{arc}=[-,>=latex]

\draw[arc,draw=lblue, ultra thick] (0)--(0b);
\draw[arc,draw=lred, ultra thick] (1)--(1b);
\draw[arc,draw=lred, ultra thick] (2)--(2b);

\node[noeud,draw=lgreen, ultra thick] (A) at (0,-2) {$A$};
\node[noeud,draw=lblue, ultra thick] (B) at (1.5,-2) {$B$};
\node[noeud,draw=lred, ultra thick] (C) at (3,-2) {$C$};

\node[noeud,draw=lred, ultra thick] (a) at (0,-3) {$\alpha$};
\node[noeud,draw=lblue, ultra thick] (b) at (1.5,-3) {$\beta$};
\node[noeud,draw=lred, ultra thick] (c) at (3,-3) {$\gamma$};

\draw[arc,draw=lblue, ultra thick] (B)--(b);
\draw[arc,draw=lred, ultra thick] (C)--(c);
\draw[arc,draw=lred, ultra thick] (C)--(a);

\end{tikzpicture}
\end{minipage}
\caption{\small Simple substitution cipher induced by the order of letters on two examples, one where the two messages are isomorphic  (left), and one where there are not (right). In the latter, the last letter of both messages is ignored as an inconsistency is detected at the penultimate letter.}
\label{fig:sequence_cipher}
\end{figure}

In this article, we are interested in the analogous problem of determining whether two messages are identical up to a substitution cipher, but instead of a linear sequence, the letters are placed as labels on nodes of unordered trees -- i.e. for which the order among children of a same node is not relevant. 

Instead of requiring that the two messages are of same length -- as it was the case for sequences, we require that the two trees are \emph{isomorphic}, i.e. they share the same topology. The reading order of letters is not induced by the sequence but by a \emph{tree isomorphism}, that is a bijection between the nodes of both trees, that respect topology constraints. While the reading order is unique for sequences, for trees, the number of isomorphisms is given by a product of as many factorials as the number of nodes of the tree (see upcoming equation (\ref{eq:nb_isom}) and illustrative Fig.~\ref{fig:cardequiv}). Although this number depends highly on the topology, ignoring pathological cases, it is usually extremely large.  To give an order of magnitude, for a million replicates of random recursive trees \cite{zhang2015number} of size 100, the average number of tree isomorphisms is $6.88\times 10^8$ -- with a median of $2.21 \times 10^5$. The \emph{tree ciphering isomorphism problem} can then be precised as:

\emph{``Given two isomorphic unordered trees, is there any tree isomorphism that induces a substitution cipher of the labels of one tree onto the other?''}

This question induces an equivalence relation on trees: two topologically isomorphic unordered trees with labels are equivalent if and only if there exists a tree isomorphism that induces a substitution cipher on the labels that transforms one tree onto the other -- see Theorem~\ref{th:equiv_relation}. The problem is formally introduced in this paper in Section~\ref{sec:problem},  while an example is provided now in Fig.~\ref{fig:tree_exple}.

\begin{figure}[!h]
\begin{minipage}{0.33\textwidth}
\centering
\begin{tikzpicture}[xscale=0.7,yscale=0.7]
\tikzstyle{noeud}=[draw,circle,fill=white,scale=0.8]
\tikzstyle{arc}=[-,>=latex]

\node[noeud] (u1) at (0,0) {$A$};
\node[noeud] (u2) at (-1,-1) {$B$};
\node[noeud] (u3) at (1,-1) {$B$};
\node[noeud] (u4) at (-1,-2) {$A$};
\node[noeud] (u5) at (1,-2) {$C$};

\draw[arc] (u1)--(u2);
\draw[arc] (u1)--(u3);
\draw[arc] (u2)--(u4);
\draw[arc] (u3)--(u5);

\node[noeud] (v1) at (3,0) {$\alpha$};
\node[noeud] (v2) at (2,-1) {$\beta$};
\node[noeud] (v3) at (4,-1) {$\beta$};
\node[noeud] (v4) at (2,-2) {$\alpha$};
\node[noeud] (v5) at (4,-2) {$\gamma$};

\draw[arc] (v1)--(v2);
\draw[arc] (v1)--(v3);
\draw[arc] (v2)--(v4);
\draw[arc] (v3)--(v5);

\node at (0,0.6) {$T_1$};
\node at (3,0.6) {$T_2$};

\end{tikzpicture}
\end{minipage}\hfill
\begin{minipage}{0.64\textwidth}
\caption{\small Two messages encoded as labels on unordered trees $T_1$ and $T_2$ (left).  $T_1$ and $T_2$ are topologically identical. There exist two tree isomorphisms between $T_1$ and $T_2$, one inducing a simple substitution cipher (below, left) and the other one that does not (below, right). In the latter, the full tree isomorphism is not parsed as an inconsistency is detected before. Overall, the two labeled trees $T_1$ and $T_2$ are isomorphic since at least one tree isomorphism leads to a substitution cipher.}
\label{fig:tree_exple}
\end{minipage}

\begin{minipage}{0.5\textwidth}
\centering
\begin{tikzpicture}[xscale=0.7,yscale=0.7]
\tikzstyle{noeud}=[draw,circle,fill=white,scale=0.8]
\tikzstyle{arc}=[-,>=latex]

\node[noeud,draw=lgreen, ultra thick] (u1) at (0,0) {$A$};
\node[noeud,draw=lblue, ultra thick] (u2) at (-1,-1) {$B$};
\node[noeud,draw=lblue, ultra thick] (u3) at (1,-1) {$B$};
\node[noeud,draw=lgreen, ultra thick] (u4) at (-1,-2) {$A$};
\node[noeud,draw=lred, ultra thick] (u5) at (1,-2) {$C$};

\draw[arc] (u1)--(u2);
\draw[arc] (u1)--(u3);
\draw[arc] (u2)--(u4);
\draw[arc] (u3)--(u5);

\node[noeud,draw=lgreen, ultra thick] (v1) at (4,0) {$\alpha$};
\node[noeud,draw=lblue, ultra thick] (v2) at (3,-1) {$\beta$};
\node[noeud,draw=lblue, ultra thick] (v3) at (5,-1) {$\beta$};
\node[noeud,draw=lgreen, ultra thick] (v4) at (3,-2) {$\alpha$};
\node[noeud,draw=lred, ultra thick] (v5) at (5,-2) {$\gamma$};

\draw[arc] (v1)--(v2);
\draw[arc] (v1)--(v3);
\draw[arc] (v2)--(v4);
\draw[arc] (v3)--(v5);

\node at (0,0.6) {$T_1$};
\node at (4,0.6) {$T_2$};

\draw[arc,draw=lgreen, ultra thick] (u1)--(v1);
\draw[arc,draw=lgreen, ultra thick] (u4) to[bend left=-45](v4);
\draw[arc,draw=lred, ultra thick] (u5) to[bend left=-45](v5);
\draw[arc,draw=lblue, ultra thick] (u2) to[bend left=30](v2);
\draw[arc,draw=lblue, ultra thick] (u3) to[bend left=30](v3);

\node at (2,-3.4) {Induced cipher};

\node[noeud,draw=lgreen, ultra thick] (A) at (0.5,-4) {$A$};
\node[noeud,draw=lblue, ultra thick] (B) at (2,-4) {$B$};
\node[noeud,draw=lred, ultra thick] (C) at (3.5,-4) {$C$};

\node[noeud,draw=lgreen, ultra thick] (a) at (0.5,-5) {$\alpha$};
\node[noeud,draw=lblue, ultra thick] (b) at (2,-5) {$\beta$};
\node[noeud,draw=lred, ultra thick] (c) at (3.5,-5) {$\gamma$};

\draw[arc,draw=lblue, ultra thick] (B)--(b);
\draw[arc,draw=lgreen, ultra thick] (A)--(a);
\draw[arc,draw=lred, ultra thick] (C)--(c);

\end{tikzpicture}
\end{minipage}~
\begin{minipage}{0.5\textwidth}
\centering
\begin{tikzpicture}[xscale=0.7,yscale=0.7]
\tikzstyle{noeud}=[draw,circle,fill=white,scale=0.8]
\tikzstyle{arc}=[-,>=latex]

\node[noeud,draw=lgreen, ultra thick] (u1) at (0,0) {$A$};
\node[noeud,draw=lblue, ultra thick] (u2) at (-1,-1) {$B$};
\node[noeud,draw=lblue, ultra thick] (u3) at (1,-1) {$B$};
\node[noeud,draw=lgreen, ultra thick] (u4) at (-1,-2) {$A$};
\node[noeud] (u5) at (1,-2) {$C$};

\draw[arc] (u1)--(u2);
\draw[arc] (u1)--(u3);
\draw[arc] (u2)--(u4);
\draw[arc] (u3)--(u5);

\node[noeud,draw=lgreen, ultra thick] (v1) at (4,0) {$\alpha$};
\node[noeud,draw=lblue, ultra thick] (v2) at (3,-1) {$\beta$};
\node[noeud,draw=lblue, ultra thick] (v3) at (5,-1) {$\beta$};
\node[noeud] (v4) at (3,-2) {$\alpha$};
\node[noeud,draw=lgreen, ultra thick] (v5) at (5,-2) {$\gamma$};

\draw[arc] (v1)--(v2);
\draw[arc] (v1)--(v3);
\draw[arc] (v2)--(v4);
\draw[arc] (v3)--(v5);

\node at (0,0.6) {$T_1$};
\node at (4,0.6) {$T_2$};

\draw[arc,draw=lgreen, ultra thick] (u1)--(v1);
\draw[arc,draw=lgreen, ultra thick] (u4)to[bend left=-30](v5);
\draw[arc,draw=lblue, ultra thick] (u2) to[bend left=20](v3);
\draw[arc,draw=lblue, ultra thick] (u3)--(v2);

\node at (2,-3.4) {Induced cipher};

\node[noeud,draw=lgreen, ultra thick] (A) at (0.5,-4) {$A$};
\node[noeud,draw=lblue, ultra thick] (B) at (2,-4) {$B$};
\node[noeud,draw=lred, ultra thick] (C) at (3.5,-4) {$C$};

\node[noeud,draw=lgreen, ultra thick] (a) at (0.5,-5) {$\alpha$};
\node[noeud,draw=lblue, ultra thick] (b) at (2,-5) {$\beta$};
\node[noeud,draw=lgreen, ultra thick] (c) at (3.5,-5) {$\gamma$};

\draw[arc,draw=lblue, ultra thick] (B)--(b);
\draw[arc,draw=lgreen, ultra thick] (A)--(a);
\draw[arc,draw=lgreen, ultra thick] (A)--(c);

\end{tikzpicture}
\end{minipage}
\end{figure}

Determining if two trees are \emph{topologically} isomorphic can be achieved within linear time via the so-called AHU algorithm \cite[Ex.~3.2]{aho1974design}. Determining if two labeled trees are isomorphic under the definition above is, on the other hand, a difficult problem. It is an instance of labeled graph isomorphism -- see \cite{zemlyachenko1985graph} and \cite{champin2003measuring} -- that was introduced under the name \emph{marked tree isomorphism} in \cite[Section~6.4]{booth1979problems}, where it has been proved \emph{graph isomorphism complete}, i.e. as hard as graph isomorphism. The latter is still an open problem, where no proof of NP-completeness nor polynomial algorithm is known \cite{schoning1987graph}.

One classic family of algorithms trying to achieve graph isomorphism are \emph{color refinement algorithms}, also known as Weisfeiler-Leman algorithms \cite{weisfeiler1968reduction}. Both graphs are colored according to some rules, and the color histograms are compared afterwards : if they diverge, the graphs are not isomorphic. However, this test is incomplete in the sense that there exist non-isomorphic graphs that are not distinguished by the coloring. The distinguishability of those algorithms is constantly improved -- see \cite{grohe2021deep} for recent results -- but does not yet answer the problem for any graph. Actually, AHU algorithm for topological tree isomorphism can be interpreted as a color refinement algorithm.

To address the tree ciphering isomorphism problem, one strategy is to explore the space of tree isomorphisms and look for one that induces a ciphering, if it exists. As stated earlier and as discussed in Section~\ref{sec:problem}, such a search space is factorially large. This paper does not seek to solve the tree ciphering isomorphism problem, but rather to break the combinatorial complexity of the search space.

In Section~\ref{sec:break}, we present an algorithm fulfilling this objective. Even if it uses AHU algorithm, our method does not involve a color refinement process. Actually, we adopt a strategy that is more related to constrained matching problems in bipartite graphs \cite{canzar2015tree,mastrolilli2012constrained}. In details, since we are building two isomorphisms simultaneously -- one on trees and the other on labels -- that must be compatible, the general idea is to use the constraints of one to make deductions about the other, and vice versa. For instance, whenever two nodes must be mapped together, so are their labels, and therefore you can eliminate all potential tree isomorphisms that would have mapped those labels differently. When no more deductions are possible, our algorithm stops. To complete (if feasible) the two isomorphisms, and to explore the remaining space, different strategies can be considered, including, for example, backtracking. However, this is not the purpose of this paper which aims to break the combinatorial complexity of the space of tree isomorphisms compatible to substitution ciphering.

Finally, in Section~\ref{sec:analysis}, we show that our algorithm runs in linear time -- at least experimentally. Moreover, we show on simulated data that it reduces on average the cardinality of the search space of an \emph{exponential factor} -- which shows the great interest of this approach especially considering its low computational cost.

\section{Problem formulation}\label{sec:problem}
\subsection{Tree isomorphisms}\label{ss:background}
A (rooted) tree is a connected directed graph without cycle such that (i) there exists a special node called the root, which has no parent, and (ii) any node different from the root has exactly one parent. The parent of a node $u$ is denoted by $\parent(u)$, where its children are denoted as $\children(u)$. Trees are said to be unordered if the order among siblings is not significant. In a sequel, we use \emph{tree} to designate a unordered rooted tree.

The \emph{degree} of a node is defined as $\deg(u)=\#\children(u)$, and the degree of a tree is $\deg(T) = \max_{u\in T} \deg(u)$. The leaves $\leaves(T)$ of a tree $T$ are all the nodes without any children. The depth $\depth(u)$ of a node $u$ is the length of the path between $u$ and the root. The depth $\depth(T)$ of $T$ is the maximal depth among all nodes. For any node $u$ of $T$, we define the \emph{subtree} $T[u]$ rooted in $u$ as the tree composed of $u$ and all of its descendants.

Let $T_1$ and $T_2$ be two trees.
\begin{definition}\label{def:isomtree}
A bijection $\varphi : T_1 \to T_2$ is a tree isomorphism if and only if, for any $u,v\in T_1$, if $u$ is a child of $v$ in $T_1$, then $\varphi(u)$ is a child of $\varphi(v)$ in $T_2$; in addition, roots must be mapped together.
\end{definition}
We can define $\isom(T_1,T_2)$ as the set of all tree isomorphisms between $T_1$ and $T_2$. If this set is not empty, then $T_1$ and $T_2$ are \emph{topologically isomorphic} and we denote $T_1\equiv T_2$. It is well known that $\equiv$ is an equivalence relation over the set of trees  \cite[Chapter~4]{valiente2002algorithms}. Fig.~\ref{fig:isomtree} provides an example of tree isomorphism.

\begin{figure}[H]
\centering
\def\xscale{0.25}
\def\yscale{0.5}
\def\nodescale{0.7}
\def\position{1pt}
\begin{minipage}{0.46\textwidth}
\centering
\begin{tikzpicture}[xscale=\xscale,yscale=\yscale]
\tikzstyle{noeud}=[draw,circle,fill=white,scale=\nodescale*1]
\tikzstyle{attribut}=[scale=\nodescale*1,font=\bf]
\tikzstyle{arc}=[->-,>=latex]
\tikzstyle{fleche}=[->,>=latex,red,thick]

\node[noeud,fill=lred] (u1) at (2,2) {a};
\node[noeud,fill=lgreen] (u2) at (6,0) {b};
\node[noeud,fill=lblue] (u3) at (2,0) {c};
\node[noeud,fill=lpurple] (u4) at (5,-2) {d};
\node[noeud,fill=lorange] (u5) at (7,-2) {e};
\node[noeud,fill=lpink] (u6) at (1,-2) {f};
\node[noeud,fill=lbrown] (u7) at (3,-2) {g};
\node[noeud,fill=lyellow] (u8) at (-2,0) {h};

\node (t1) at (2,3) {$T_1$};

\draw[arc] (u1)--(u2) ;
\draw[arc] (u1)--(u3) ;
\draw[arc] (u1)--(u8) ;
\draw[arc] (u2)--(u4) ;
\draw[arc] (u2)--(u5) ;
\draw[arc] (u3)--(u6) ;
\draw[arc] (u3)--(u7) ;

\def\xshift{12}

\node[noeud,fill=lred] (v1) at (2+\xshift,2) {1};
\node[noeud,fill=lblue] (v2) at (-2+\xshift,0) {2};
\node[noeud,fill=lgreen] (v3) at (2+\xshift,0) {3};
\node[noeud,fill=lbrown] (v4) at (-3+\xshift,-2) {4};
\node[noeud,fill=lpink] (v5) at (-1+\xshift,-2) {5};
\node[noeud,fill=lpurple] (v6) at (1+\xshift,-2) {6};
\node[noeud,fill=lorange] (v7) at (3+\xshift,-2) {7};
\node[noeud,fill=lyellow] (v8) at (6+\xshift,0) {8};

\node (t1) at (2+\xshift,3) {$T_2$};

\draw[arc] (v1)--(v2) ;
\draw[arc] (v1)--(v3) ;
\draw[arc] (v1)--(v8) ;
\draw[arc] (v2)--(v4) ;
\draw[arc] (v2)--(v5) ;
\draw[arc] (v3)--(v6) ;
\draw[arc] (v3)--(v7) ;

\end{tikzpicture}

\end{minipage}\hfill
\begin{minipage}{0.5\textwidth}
\centering
\setlength\tabcolsep{3pt} 
\begin{tabular}{r|cccccccc}
$u \in T_1$ & \colorbox{lred}{a} &\colorbox{lgreen}{b} &\colorbox{lblue}{c}&\colorbox{lpurple}{d} &\colorbox{lorange}{e} &\colorbox{lpink}{f} &\colorbox{lbrown}{g} &\colorbox{lyellow}{h} \\
\hline
$\varphi(u)\in T_2$ &  \colorbox{lred}{1} &\colorbox{lgreen}{3}& \colorbox{lblue}{2}& \colorbox{lpurple}{6}& \colorbox{lorange}{7}& \colorbox{lpink}{5}& \colorbox{lbrown}{4}& \colorbox{lyellow}{8}
\end{tabular}
\caption{\small Two topologically isomorphic trees $T_1$ and $T_2$ (left) and an example of tree isomorphism $\varphi\in \isom(T_1,T_2)$ (above). Nodes are labeled and colored for ease of comprehension.}
\label{fig:isomtree}
\end{minipage}
\end{figure}
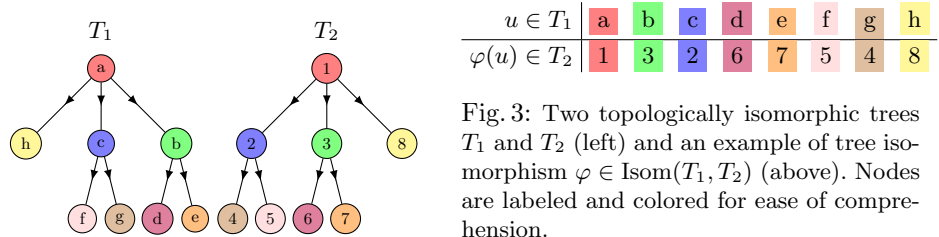

The class of equivalence of node $u\in T_i$ under $\equiv$ -- denoted by $[u]$ -- is the set of all nodes $v\in T_i$ such that $T_i[u]\equiv T_i[v]$.  So-called AHU algorithm \cite[Ex.~3.2]{aho1974design} assigns in a bottom-up manner to each node $u$ of both trees a color that represents $[u]$. The algorithm can thereby conclude in linear time whether two trees are isomorphic, if and only if their roots are identically colored.

\begin{wrapfigure}[6]{R}{0.5\textwidth}
\vspace{-1.5\baselineskip}
\centering
\begin{minipage}{0.25\textwidth}
\centering
\def\xscale{0.25}
\def\yscale{0.4}
\def\nodescale{0.7}

\begin{tikzpicture}[xscale=\xscale,yscale=\yscale]
\tikzstyle{noeud}=[draw,circle,fill=white,scale=\nodescale*1]
\tikzstyle{attribut}=[scale=\nodescale*1,font=\bf]
\tikzstyle{arc}=[->-,>=latex]
\tikzstyle{fleche}=[->,>=latex,red,thick]

\node[noeud,fill=lblue] (u1) at (2,2) {};
\node[noeud,fill=lred] (u2) at (6,0) {};
\node[noeud,fill=lred] (u3) at (2,0) {};
\node[noeud,fill=lgreen] (u4) at (5,-2) {};
\node[noeud,fill=lgreen] (u5) at (7,-2) {};
\node[noeud,fill=lgreen] (u6) at (1,-2) {};
\node[noeud,fill=lgreen] (u7) at (3,-2) {};
\node[noeud,fill=lgreen] (u8) at (-2,0) {};

\draw[arc] (u1)--(u2) ;
\draw[arc] (u1)--(u3) ;
\draw[arc] (u1)--(u8) ;
\draw[arc] (u2)--(u4) ;
\draw[arc] (u2)--(u5) ;
\draw[arc] (u3)--(u6) ;
\draw[arc] (u3)--(u7) ;

\draw[rounded corners,thick,red] (0,-2.5) rectangle (3.9,-1.5);
\draw[rounded corners,thick,red] (4.1,-2.5) rectangle (8,-1.5);
\draw[rounded corners,thick,red] (1,-0.5) rectangle (7,0.5);

\node[red] at (-0.5,-1.5) {$2$};
\node[red] at (0.5,0.5) {$2$};
\node[red] at (8.5,-1.5) {$2$};

\end{tikzpicture}
\end{minipage}~
\begin{minipage}{0.25\textwidth}
\caption{\small A tree $T$. Nodes susceptible to be swapped are boxed together, leading to $N_\equiv(T) = (2!)^3 = 8$.}
\label{fig:cardequiv}
\end{minipage}
\end{wrapfigure}

Any tree isomorphism $\varphi : T_1\to T_2$ maps $u\in T_1$ onto $v=\varphi(u)\in T_2$ only if $[u]=[v]$. Thus, all tree isomorphisms can be -- recursively from the root -- obtained by swapping nodes (i) of same equivalence class and (ii) children of a same node. Consequently, the number of tree isomorphisms between $T_1$ and $T_2$ depends only on the class of equivalence of $T_1$ (equivalently $T_2$), and will be denoted by $N_\equiv(T_1)$. For any tree $T$, we have 
\begin{equation}\label{eq:nb_isom}
N_\equiv(T)= \prod_{u\in T}\prod_{q \in \lbrace [v] : v \in \children(u)\rbrace} \left(\# \lbrace v \in \children(u) :[v]=q\rbrace\right)!.
\end{equation}
An example is provided in Fig.~\ref{fig:cardequiv}.

\subsection{Tree cipherings}\label{ss:cipherings}
We now assume that each node of a tree carries a \emph{label}. Let $T$ be a tree and $u\in T$; we denote by $\overline{u}$ the label of node $u$. The alphabet of $T$, denoted by $\attr(T)$, is defined as $\attr(T) = \cup_{u\in T} \overline{u}$. We say that $T$ is a labeled tree.

Let $T_1$ and $T_2$ be two topologically isomorphic labeled trees and $\varphi\in\isom(T_1,T_2)$. $\varphi$ naturally induces a binary relation $\mathrel{R_\varphi}$ over sets $\attr(T_1)$ and $\attr(T_2)$, defined as
\[
\forall x\in \attr(T_1), \forall y\in\attr(T_2),  x \mathrel{R_\varphi} y \iff \exists u\in T_1, (x=\overline{u} )\wedge (y=\overline{\varphi(u)}).
\]
Fig.~\ref{fig:rel_induite} illustrates this induced binary relation on an example.

\begin{figure}[h]
\centering
\def\xscale{0.25}
\def\yscale{0.5}
\def\nodescale{0.7}
\begin{minipage}{0.45\textwidth}
\centering
\begin{tikzpicture}[xscale=\xscale,yscale=\yscale]
\tikzstyle{noeud}=[draw,circle,fill=white,scale=\nodescale*1]
\tikzstyle{attribut}=[scale=\nodescale*1,font=\bf]
\tikzstyle{arc}=[->-,>=latex]
\tikzstyle{fleche}=[->,>=latex,red,thick]

\node[noeud,fill=lred] (u1) at (2,2) {$A$};
\node[noeud,fill=lgreen] (u2) at (6,0) {$B$};
\node[noeud,fill=lblue] (u3) at (2,0) {$C$};
\node[noeud,fill=lpurple] (u4) at (5,-2) {$C$};
\node[noeud,fill=lorange] (u5) at (7,-2) {$B$};
\node[noeud,fill=lpink] (u6) at (1,-2) {$A$};
\node[noeud,fill=lbrown] (u7) at (3,-2) {$C$};
\node[noeud,fill=lyellow] (u8) at (-2,0) {$D$};

\node (t1) at (2,3) {$T_1$};

\draw[arc] (u1)--(u2) ;
\draw[arc] (u1)--(u3) ;
\draw[arc] (u1)--(u8) ;
\draw[arc] (u2)--(u4) ;
\draw[arc] (u2)--(u5) ;
\draw[arc] (u3)--(u6) ;
\draw[arc] (u3)--(u7) ;

\def\xshift{12}

\node[noeud,fill=lred] (v1) at (2+\xshift,2) {$\alpha$};
\node[noeud,fill=lblue] (v2) at (-2+\xshift,0) {$\gamma$};
\node[noeud,fill=lgreen] (v3) at (2+\xshift,0) {$\alpha$};
\node[noeud,fill=lbrown] (v4) at (-3+\xshift,-2) {$\gamma$};
\node[noeud,fill=lpink] (v5) at (-1+\xshift,-2) {$\alpha$};
\node[noeud,fill=lpurple] (v6) at (1+\xshift,-2) {$\gamma$};
\node[noeud,fill=lorange] (v7) at (3+\xshift,-2) {$\beta$};
\node[noeud,fill=lyellow] (v8) at (6+\xshift,0) {$\gamma$};

\node (t1) at (2+\xshift,3) {$T_2$};

\draw[arc] (v1)--(v2) ;
\draw[arc] (v1)--(v3) ;
\draw[arc] (v1)--(v8) ;
\draw[arc] (v2)--(v4) ;
\draw[arc] (v2)--(v5) ;
\draw[arc] (v3)--(v6) ;
\draw[arc] (v3)--(v7) ;

\end{tikzpicture}
\end{minipage}~
\begin{minipage}{0.2\textwidth}
\centering
\begin{tikzpicture}[xscale=\xscale,yscale=\yscale]
\tikzstyle{noeud}=[draw,circle,fill=white,scale=\nodescale*1]
\tikzstyle{arc}=[->,>=latex]

\node[noeud] (a) at (0,4.5) {$A$};
\node[noeud] (b) at (0,3) {$B$};
\node[noeud] (c) at (0,1.5) {$C$};
\node[noeud] (d) at (0,0) {$D$};

\node[noeud] (1) at (5,3.75) {$\alpha$};
\node[noeud] (2) at (5,2.25) {$\beta$};
\node[noeud] (3) at (5,0.75) {$\gamma$};

\node (a1) at (0,6) {$\attr(T_1)$};
\node (a1) at (5,6) {$\attr(T_2)$};

\draw[arc] (a)--(1) ;
\draw[arc] (b)--(1) ;
\draw[arc] (b)--(2) ;
\draw[arc] (c)--(3) ;
\draw[arc] (d)--(3) ;

\end{tikzpicture}
\end{minipage}\hfill
\begin{minipage}{0.3\textwidth}
\centering
\caption{\small Two topologically isomorphic labeled trees (left) and the induced binary relation (right). The tree isomorphism $\varphi$ is displayed through node colors -- cf. Fig.~\ref{fig:isomtree}.}
\label{fig:rel_induite}
\end{minipage}
\end{figure}

Such a relation $\mathrel{R_\varphi}$ is said to be a bijection if and only if for any $x\in \attr(T_1)$, there exists a unique $y\in \attr(T_2)$ so that $x\mathrel{R_\varphi} y$, and conversely if for any $y\in \attr(T_2)$, there exists a unique $x\in \attr(T_1)$ so that $x\mathrel{R_\varphi} y$. This is not the case of the relation induced by the example in Fig.~\ref{fig:rel_induite}, since $C$ and $D$ are both in relation to $\gamma$, and also $B$ is in relation to both $\alpha$ and $\beta$.

When $\mathrel{R_\varphi}$ is a bijection, we can define a bijective function $f_\varphi : \attr(T_1) \to \attr(T_2)$ by $f_\varphi(x)=y\iff x\mathrel{R_\varphi} y$. This function is called a \emph{substitution cipher} (following the analogy developed in the introduction) and verifies $\forall u\in T_1, f_\varphi(\overline{u}) = \overline{\varphi(u)}.$

\begin{definition}\label{def:cipher-tree-isom}
$\varphi \in \isom(T_1,T_2)$ is said to be a \emph{tree ciphering} if and only if $\mathrel{R_\varphi}$ is a bijection; in which case we denote $T_1\tf{\varphi} T_2$.
\end{definition}
Let us denote by $\enc(T_1,T_2)$ the set of tree cipherings between $T_1$ and $T_2$. If $\enc(T_1,T_2)$ is not empty, then we write $T_1\sim T_2$ and say that $T_1$ and $T_2$ are \emph{isomorphic by substitution ciphering}, since the following results holds.
\begin{theorem}\label{th:equiv_relation}
  $\sim$ is an equivalence relation over the set of labeled trees.
\end{theorem}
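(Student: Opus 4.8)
The plan is to verify the three defining properties of an equivalence relation --- reflexivity, symmetry, and transitivity --- directly from Definition~\ref{def:cipher-tree-isom}, by exhibiting the appropriate tree cipherings in each case. Throughout, the key observation is that a tree ciphering carries with it the data of a bijection $f_\varphi$ on alphabets, and that this auxiliary bijection behaves functorially: the identity on the alphabet corresponds to the identity tree isomorphism, inverses correspond to inverses, and composites to composites.

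\medskip
\noindent\textbf{Reflexivity.} For any labeled tree $T$, the identity map $\id : T \to T$ is clearly a tree isomorphism. The induced relation $R_{\id}$ satisfies $x \mathrel{R_{\id}} y \iff \exists u \in T,\ (x = \overline{u}) \wedge (y = \overline{u})$, i.e. $R_{\id}$ is just equality on $\attr(T)$, which is a bijection (namely $f_{\id} = \id_{\attr(T)}$). Hence $\id \in \enc(T,T)$ and $T \sim T$.

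\medskip
\noindent\textbf{Symmetry.} Suppose $T_1 \sim T_2$, so there is $\varphi \in \enc(T_1, T_2)$. Then $\varphi^{-1} : T_2 \to T_1$ is a tree isomorphism (this is part of $\equiv$ being an equivalence relation, cited from \cite[Chapter~4]{valiente2002algorithms}, and is immediate since inverting a bijection preserving the child relation and the root again does so). I claim $R_{\varphi^{-1}}$ is the inverse relation of $R_\varphi$: indeed $y \mathrel{R_{\varphi^{-1}}} x \iff \exists v \in T_2,\ (y = \overline{v}) \wedge (x = \overline{\varphi^{-1}(v)})$, and setting $u = \varphi^{-1}(v)$ (so $v = \varphi(u)$ ranges over all of $T_1$ and $T_2$ respectively as $v$ does) this is exactly $x \mathrel{R_\varphi} y$. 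Since the inverse of a bijective relation is a bijective relation --- concretely $f_{\varphi^{-1}} = f_\varphi^{-1}$ --- we get $\varphi^{-1} \in \enc(T_2, T_1)$, hence $T_2 \sim T_1$.

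\medskip
\noindent\textbf{Transitivity.} Suppose $T_1 \sim T_2$ and $T_2 \sim T_3$, witnessed by $\varphi \in \enc(T_1,T_2)$ and $\psi \in \enc(T_2,T_3)$. Then $\psi \circ \varphi : T_1 \to T_3$ is a tree isomorphism (composition of tree isomorphisms is a tree isomorphism). For the induced relation, one checks that $R_{\psi \circ \varphi}$ is the composite of $R_\varphi$ and $R_\psi$: if $x \mathrel{R_{\psi\circ\varphi}} z$ then there is $u \in T_1$ with $x = \overline{u}$ and $z = \overline{\psi(\varphi(u))}$, so with $y := \overline{\varphi(u)}$ we have $x \mathrel{R_\varphi} y$ and $y \mathrel{R_\psi} z$; conversely a composite pair lifts back. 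Because $R_\varphi$ and $R_\psi$ are bijections, their composite is a bijection, and in fact $f_{\psi\circ\varphi} = f_\psi \circ f_\varphi$ since for every $u \in T_1$, $f_\psi(f_\varphi(\overline{u})) = f_\psi(\overline{\varphi(u)}) = \overline{\psi(\varphi(u))}$ and these values determine the map on all of $\attr(T_1)$. Hence $\psi \circ \varphi \in \enc(T_1, T_3)$ and $T_1 \sim T_3$.

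\medskip
I do not anticipate a genuine obstacle here; the statement is essentially bookkeeping once one notices the functoriality $\varphi \mapsto f_\varphi$. The one point requiring a little care is checking that $R_\varphi$ being a \emph{bijection of relations} is preserved under taking inverses and composites --- in particular, for transitivity, that the composite of two bijective relations cannot fail to be total or single-valued; this follows because each $f_\varphi$ is an honest bijective function, so composition and inversion can be carried out at the level of functions and then one verifies these functions induce the relations $R_{\psi\circ\varphi}$ and $R_{\varphi^{-1}}$ respectively. A minor subtlety worth a sentence is that one must use $\attr(T_i) = \cup_{u \in T_i}\overline{u}$, so that every letter of each alphabet actually occurs as a label and the relations are therefore total on both sides by construction.
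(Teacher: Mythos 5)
Your proposal is correct and follows essentially the same route as the paper's proof: reflexivity via the identity, and symmetry and transitivity via the two key identities $R_{\varphi^{-1}} = R_\varphi^{-1}$ and $R_{\psi\circ\varphi} = R_\psi \circ R_\varphi$, combined with the facts that inverses and composites of bijective relations are bijective. The paper isolates these two identities as standalone lemmas, but the content and the element-chasing verifications are the same as yours.
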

\begin{proof}
The proof is deferred to Appendix~\ref{annex:proof}.
\end{proof}


\begin{remark}\label{rmk:restriction}
It is possible to be more restrictive on the choices of substitution ciphers. Let $(G,\circ)$ be a subgroup of the bijections between $\attr(T_1)$ and $\attr(T_2)$. Then, if we replace ``$\mathrel{R_\varphi}$ is a bijection'' in Definition~\ref{def:cipher-tree-isom} by ``$\mathrel{R_\varphi} \in G$'', the induced relation $\sim_G$ is also an equivalence relation.
With $G=\lbrace \id \rbrace$, $T_1\sim_G T_2$ means $T_1\equiv T_2$ plus equality of labels. It is actually the definition adopted for labeled tree isomorphism in \cite[Section~5.1]{azais2020weight}. 
\end{remark}

Determining if $T_1\sim T_2$ implies to find $\varphi\in\isom(T_1,T_2)$ such that $\varphi$ is also in $\enc(T_1,T_2)$. Therefore, the cardinality of the search space is given by (\ref{eq:nb_isom}), and is potentially exponentially large compared to the size of the trees. In the sequel of the paper, we present an algorithm that aims to break this cardinality.

\section{Breaking down the combinatorial complexity}\label{sec:break}
Let be two labeled trees $T_1$ and $T_2$. To build a tree ciphering between $T_1$ and $T_2$ (if only it exists), a strategy is to ensure that $T_1\equiv T_2$, and then explore $\isom(T_1,T_2)$, whose cardinality is given by (\ref{eq:nb_isom}). Since AHU algorithm \cite[Ex.~3.2]{aho1974design} solves the problem of determining whether $T_1\equiv T_2$ in linear time, as well as assigning to each node $u$ its equivalence class $[u]$ under $\equiv$, we use AHU as a preprocessing step.

In the case of linear messages, illustrated in Fig.~\ref{fig:sequence_cipher}, the isomorphism on labels is induced by the reading order, starting with the first letter. In our case, we know that the roots have to be mapped together and we start here. At each step of the algorithm, we will add elements to the two bijections we aim to build: $\varphi$ for the nodes and $f$ for the labels. We present in Subsection~\ref{ss:extension} how to update those bijections, with the \textsc{ExtBij} procedure.

Besides, the topological constraints imposed by tree isomorphism allow to sort the nodes of the trees and to group them by susceptibility to be mapped together. In Subsection~\ref{ss:bags}, we introduce two concepts, bags and collections, that reflects this grouping mechanism. The actual mapping of nodes is performed by the procedure \textsc{MapNodes}, introduced in Subsection~\ref{ss:map}.

Finally, the precise course of the algorithm is presented in Subsection~\ref{ss:preprocessing}. Starting by grouping all the nodes together, we successively add topological filters to refine the groups of nodes. Whenever possible, if a filter allows us to deduce that two nodes should be mapped together, we do so, thus reducing the cardinality of the remaining possibilities. The last filter checks constraints on labels and allows a last phase of deductions, before concluding the algorithm -- whose analysis is discussed in Section~\ref{sec:analysis}.

The course of the algorithm is illustrated through an example in Appendix~\ref{annex:example}.

\subsection{Extension of a bijection}\label{ss:extension}
During the execution of the algorithm, we construct two mappings: $\varphi$ for the nodes, and $f$ for the labels. They start as empty mappings $\emptyset \mapsto \emptyset$, and will be updated through time. They must remain bijective at all times, and the rules for updating them are presented here.

A partial bijection $\psi$ from $E$ to $F$ is an injective function from a subset $S_\psi$ of $E$ to $F$. Let $a\in E$ and $b\in F$; suppose we want to determine if the couple $(a,b)$ is compatible with $\psi$ -- in the sense that it respects (or does not contradict) the partial bijection. First, if $a\in S_\psi$, then $b$ must be equal to $\psi(a)$. Otherwise, if $a\not\in S_\psi$, then $b$ must not be in the image of $\psi$, i.e. $\forall s\in S_\psi, \psi(s)\neq b$. If those conditions are respected, then $(a,b)$ is compatible with $\psi$; furthermore, if $a\not\in S_\psi$, then we can extend $\psi$ on $S_\psi\cup \lbrace a\rbrace$ by defining $\psi(a)=b$ so that $\psi$ remains a partial bijection. Formally, for any $a\in E$ and $b\in F$, with $\psi$ a partial bijection from $E$ to $F$, we define
\[\textsc{ExtBij}(a,b,\psi) = \big( a\in S_\psi \implies \psi(a)=b\big) \wedge \big(a\not\in S_\psi \implies \forall s\in S_\psi, \psi(s)\neq b\big);\]
so that \textsc{ExtBij}$(a,b,\psi)$ returns $\top$ if and only if the couple $(a,b)$ is compatible with the partial bijection $\psi$. For the sake of brevity, we assume that the function \textsc{ExtBij} also extends the partial bijection in the case $a\not\in S_\psi$ by defining $\psi(a)=b$ -- naturally only if the function returned $\top$.

\textsc{ExtBij} will be used in the sequel to update both partial bijections $\varphi$ (from $T_1$ to $T_2$) and $f$ (from $\attr(T_1)$ to $\attr(T_2)$). However, if one uses the restricted substitution ciphers presented in Remark~\ref{rmk:restriction}, one must design a specific version of \textsc{ExtBij} to update $f$, accounting for the desired properties.

\subsection{Bags and collections}\label{ss:bags}
Remark that if two nodes $u\in T_1$ and $v\in T_2$ are mapped together via $\varphi$, then they must share a number of common features: (i) $\depth(u)=\depth(v)$, (ii) $[u]=[v]$, (iii) $f_\varphi(\overline{u})=\overline{v}$, and (iv) $\varphi(\parent(u))=\parent(v)$. Our goal is to gather together nodes that share such common features. For this purpose, we introduce the concepts of bags and collections. 

We recall that a partition $P$ of a set $X$ is a set of non-empty subsets $P_i$ of $X$ such that every element $x\in X$ is in exactly one of these subsets $P_i$.  Let $P$ (resp. $Q$) be a partition of the nodes of $T_1$ (resp. $T_2$).

A bag $B$ is a couple $(P_i, Q_j)$ such that $P_i\in P, Q_j\in Q$ and $\# P_i = \# Q_j$ -- this number is denoted by $\#B$. A bag contains nodes that share a number of common features, and are therefore candidates to be mapped together. If a bag is constructed such that $P_i$ and $Q_j$ each contain a single element, then those elements should be unambiguously mapped together  -- via the function \textsc{MapNodes} that will be introduced in the next subsection. Formally, this rule is expressed as:
 
 \begin{deducrule}\label{rule:mapping}
While there exist bags $B = (P_i,Q_j)$ with $P_i=\lbrace u \rbrace$ and $Q_j = \lbrace v \rbrace$, call \textsc{MapNodes}$(u,v,\varphi,f)$ -- and delete $B$.
\end{deducrule}
 
A collection $C$ gathers several $P_i$'s and $Q_j$'s, that are candidates to form bags. Formally, $C : \mathbb{N}  \to 2^{P} \times 2^{Q}$ with
$C(n)= (\lbrace P_i, i\in I\rbrace, \lbrace Q_j, j\in J\rbrace)$ -- possibly $I=J=\emptyset$ --
such that, denoting the components by $C_1(n)$ and $C_2(n)$,          
\begin{enumerate}[(i)]
\item $\forall n, \#C_1(n) = \#C_2(n)$;
\item $\forall n, \forall P_i\in C_1(n), \#P_i=n$ and $\exists a\in \attr(T_1), \forall u\in P_i, \overline{u}=a$;
\item $\forall n, \forall Q_j\in C_2(n), \#Q_j=n$ and $\exists b\in \attr(T_2), \forall v\in Q_j, \overline{v}=b$.
\end{enumerate}
We denote by $\#C(n)$ the common cardinality of (i); and $\overline{P_i}$ and $\overline{Q_j}$ the common labels of (ii) and (iii). Note that the number of $n$'s such that $\#C(n)>0$ is finite.

The elements of $C_i(n)$, since they share the same cardinality $n$, are candidates to form bags together. If $\#C(n)=1$, we can form a bag with the two elements of $C_1(n)$ and $C_2(n)$:

\begin{deducrule}\label{rule:collection}
While there exist collections $C$ and integers $n$ for which $C(n) =( \lbrace P_i \rbrace , \lbrace Q_j \rbrace)$; if \textsc{ExtBij}$(\overline{P_i},\overline{Q_j},f)$, create bag $(P_i,Q_j)$ and delete $C(n)$  -- otherwise stop and conclude that $T_1\not\sim T_2$.
\end{deducrule}

As it will be described later on, each subset $P_i$ or $Q_j$ will belong to either a bag or a collection. Any node $u$ will either be already mapped in $\varphi$, or attached to one bag or collection through the partitions.We denote by $p(u)$ the function that returns the bag or collection in which $u$ belongs to, if any. We denote by $\bags$ the set of all bags, and by $\collections$ the set of all collections.

\subsection{Mapping Nodes}\label{ss:map}

\begin{wrapfigure}[11]{R}{0.6\textwidth}
\small
\vspace{-1\baselineskip}
\begin{algorithm}[H]
\caption{\textsc{MapNodes}}\label{algo:mapnodes}
\SetKw{KwAnd}{and}
\KwIn{$u\in T_1,v\in T_2,\varphi,f$}
\eIf{\textsc{ExtBij}$(\overline{u},\overline{v},f)$ \KwAnd \textsc{ExtBij}$(u,v,\varphi)$}{
Delete $u$ from $p(u)$ and $v$ from $p(v)$\\
\textsc{SplitChildren}$(u,v)$\\
\eIf{$(\parent(u),\parent(v))\not\in \mathcal{G}_\varphi$}{
Return \textsc{MapNodes}($\parent(u),\parent(v),\varphi,f$)}{Return $\top$}
}{
Return $\bot$
}
\end{algorithm}
\end{wrapfigure}

We now present with Algorithm~\ref{algo:mapnodes} the function \textsc{MapNodes} that performs the mapping between nodes, while updating $\varphi, f, \bags$ and $\collections$. The latter two, $\bags$ and $\collections$, are considered to be ``global'' variables and are therefore not included in the pseudocode provided. 

Once two nodes $u$ and $v$ are mapped, the topology constraints impose that $\parent(u)$ and $\parent(v)$ are mapped together, if not already the case, but also $\children(u)$ and $\children(v)$. These children are either (i) already mapped -- and there is nothing to do, or (ii) in bags or collections potentially containing other nodes with which they can no longer be mapped -- since their parents are not. In the latter case, it is then necessary to separate the children of $u$ and $v$ from these bags and collections. The procedure \textsc{SplitChildren} aims to do that, in the following manner. For each $P_i$ (resp. $Q_j$) in the current partitions of nodes such that $P_u = P_i \cap \children(u)\neq \emptyset$ (resp. $Q_v= Q_j\cap \children(v)\neq\emptyset$):

\begin{itemize}
\item Either $(P_i,Q_j)$ forms a bag, in which case we delete it and create instead two new bags formed by $(P_u,Q_v)$ and $(P_i\setminus P_u, Q_j\setminus Q_v)$.
\item Either there exists a collection $C$ so that $P_i\in C_1(n)$ and $Q_j\in C_2(n)$ -- with $n=\#P_i=\#Q_j$. In which case, we remove them from their set $C_i(n)$, and add instead $P_u$ (resp. $Q_v$) to $C_1(q)$ (resp. $C_2(q)$) -- with $q=\#P_u=\#Q_v$ -- and $P_i\setminus P_u$ (resp. $Q_j\setminus Q_v$) to $C_1(n-q)$ (resp. $C_2(n-q)$). Note that this splitting operation changes the sets $C_i(\cdot)$ and therefore we need to apply Deduction Rule~\ref{rule:collection} to check whether some bags are to be created or not.
\end{itemize}

At any time, if \textsc{MapNodes} returns $\bot$, then we can immediately conclude that $T_1\not\sim T_2$ and stop. Similarly, if the procedure \textsc{SplitChildren} leads to the creation of a pathological object (e.g. a bag where $\#P_i\neq \#Q_j$), we can also conclude that $T_1\not\not\sim T_2$ and stop. We can conclude that $T_1\sim T_2$ only when all nodes have been mapped.

\subsection{The algorithm}\label{ss:preprocessing}
Let $T_1$ and $T_2$ be two labeled trees; we assume that $T_1\equiv T_2$. Let $\varphi:\emptyset \mapsto \emptyset$ and $f : \emptyset \mapsto \emptyset$. We start with no collections and a single bag containing all nodes of $T_1$ and $T_2$. The general idea is to build a finer and finer partition of the nodes (by applying successive filters), and mapping nodes whenever possible to build the two isomorphisms considered -- if they exist: $\varphi$ and $f$.

An example of execution of the algorithm can be found in Appendix~\ref{annex:example}.

\medskip
\noindent
\textbf{Depth} We partition the only bag $B=(T_1,T_2)$, defining $T_i(d) = \lbrace u\in T_i : \depth(u)=d\rbrace$ for $d=0,\dots,\depth(T_i)$. We delete $B$ from $\bags$ and  for each $d$, we create a new bag $(T_1(d),T_2(d))$. Then, apply Deduction Rule~\ref{rule:mapping}. Note that since \textsc{SplitChildren} modifies bags after mapping two nodes, the number of bags meeting the prerequisite of the mapping deduction rule can vary through the iterations. At this step, since the roots are the only nodes with depth of 0, they must be mapped together, and the deduction rule is then applied at least once.

\medskip
\noindent
\textbf{Parents and children signature}
For each bag $B=(S_1,S_2)$ in $\bags$, we partition $S_1$ and $S_2$ by shared parent, i.e. we define $S_i(v) = \lbrace u \in S_i : \parent(u)=v\rbrace$. For any such a parent $v$, we define its children signature $\sigma(v)$ as the multiset $\sigma(v) = \lbrace [u] : u\in \children(v)\rbrace$. Nodes from $S_1(v)$ and $S_2(v')$ should be mapped together only if $\sigma(v)=\sigma(v')$. We then group the nodes by signature -- losing at the same time the parent information, but which will be recovered through the function \textsc{MapNodes} -- and define $S_i(s) = \cup_{\sigma(v)=s} S_i(v)$. We then create new bags $(S_1(s),S_2(s))$ for each such $s$, and finally delete $B$.

Once all bags have been partitioned, apply again Deduction Rule~\ref{rule:mapping}.

\medskip
\noindent
\textbf{Equivalence class under $\equiv$}
For each remaining bag $B=(S_1,S_2)$ in $\bags$, we partition $S_1$ and $S_2$ by equivalence class under $\equiv$, i.e. we define $S_i(c) = \lbrace u \in S_i : [u] = c\rbrace$. We then create new bags $(S_1(c),S_2(c))$ for each such $c$, and finally delete $B$.

Once all bags have been partitioned, apply again Deduction Rule~\ref{rule:mapping}.

\medskip
\noindent
\textbf{Labels} For each remaining bag $B=(S_1,S_2)$ in $\bags$, we now look at the labels of nodes in $S_1$ and $S_2$. We define $S_i(a) = \lbrace u\in S_i : \overline{u} = a\rbrace$. Some of these labels may have been seen previously and may be already mapped in $f$, in which case we can form bags with the related sets $S_i(a)$. Formally, we apply the following deduction rule.

\begin{deducrule}\label{rule:attr}
While there exist two sets (of same cardinality) $S_1(a)$ and $S_2(b)$ with $f(a)=b$, create bag $(S_1(a), S_2(b))$. If only one of the two sets exists ($S_1(a)$ with $a\in D_f$ or $S_2(b)$ with $b\in I_f$) but not its counterpart, we can conclude that $T_1\not\sim T_2$ and stop.
\end{deducrule}

The remaining $S_i(a)$ are to be mapped together. However, since we do not know the mapping between their labels, we cannot yet regroup them in bags. We create instead a collection $C$ that contains all those $S_i(a)$, and delete bag $B$.

Once all bags have been partitioned, either in new bags or in collections, apply Deduction Rule~\ref{rule:collection}. Since this rule maps new labels between them, new bags may be created by virtue of Deduction Rule~\ref{rule:attr}. Consequently, Deduction Rule~\ref{rule:attr} should be applied every time a bag is created by Deduction Rule~\ref{rule:collection} -- including during the \textsc{SplitChildren} procedure. Finally, apply again Deduction Rule~\ref{rule:mapping}.

\section{Analysis of the algorithm}\label{sec:analysis}
The analysis presented here is based on theoretical considerations and numerical simulations of labeled trees. For several given $n$ and $\mathcal{A}$, we generated 500 couples $(T_1,T_2)$ as follows. To create $T_1$, we generate a random recurvise tree \cite{zhang2015number} of size $n$, and assign a label, randomly chosen from the alphabet $\mathcal{A}$, to each node. We build $T_2$ as a copy of $T_1$, before randomly shuffling the children of each node. In this case, $T_1\sim T_2$. To get $T_1\not\sim T_2$, we choose a node $u$ of $T_2$ at random and replace its label by another one, drawn among $\mathcal{A}(T_1)\setminus \lbrace \overline{u}\rbrace$ -- this is the most difficult case to determine if $T_1\not\sim T_2$. The results are gathered in Figs.~\ref{fig:time} and~\ref{fig:logratio} and discussed later in the section. Remarkably, in terms of computation times and combinatorial complexity, they seem to mostly depend on $n$, and not $\#\mathcal{A}$. 

\subsection{The algorithm is linear}
In spite of an intricate back and forth structure between nodes, bags and collections (notably through deduction rules and the \textsc{SplitChildren} procedure), our algorithm is linear, in the following sense.

\begin{proposition}
The number of calls to the function \textsc{MapNodes} is bounded by the size of the trees.
\end{proposition}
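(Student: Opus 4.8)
The plan is to argue that every call to \textsc{MapNodes} is responsible for a distinct successful node mapping, so the total number of calls cannot exceed the number of nodes. First I would observe that a call to \textsc{MapNodes}$(u,v,\varphi,f)$ falls into one of two cases. If it returns $\bot$, the algorithm halts immediately (it concludes $T_1\not\sim T_2$), so there can be at most one such call over the whole execution. Otherwise the call passes the test \textsc{ExtBij}$(\overline{u},\overline{v},f)\wedge\textsc{ExtBij}(u,v,\varphi)$; I claim that in this branch $u\notin S_\varphi$ before the call — i.e. the pair $(u,v)$ is a genuinely new entry added to $\varphi$. The reason is that the only way \textsc{MapNodes}$(u,v,\cdot)$ is ever invoked is either through Deduction Rule~\ref{rule:mapping} applied to a bag $(\{u\},\{v\})$ (but $u$ is still attached to that bag, hence not yet in $S_\varphi$), or through the recursive tail call \textsc{MapNodes}$(\parent(u),\parent(v),\varphi,f)$, which is guarded precisely by the condition $(\parent(u),\parent(v))\notin\mathcal G_\varphi$, i.e. the parent is not already mapped. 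So in the $\top$-branch, \textsc{ExtBij}$(u,v,\varphi)$ genuinely extends $\varphi$ by one element, and this element is then removed from its bag/collection and never re-enters the pool of unmapped nodes.

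Next I would set up the counting. Let $N=\#T_1=\#T_2$. Each successful (non-$\bot$) call to \textsc{MapNodes} increases $\#S_\varphi$ by exactly one, as just argued, and $\#S_\varphi$ only ever increases and is bounded above by $N$. Hence there are at most $N$ successful calls. Adding the at most one unsuccessful ($\bot$) call gives a bound of $N+1$; to get the sharper bound ``bounded by the size of the trees'' I would note that once $\#S_\varphi=N$ every node is mapped, the algorithm terminates with $T_1\sim T_2$, and no further \textsc{MapNodes} call (successful or not) can occur, so in fact the $\bot$-call, if it happens, is one of the at-most-$N$ calls rather than an extra one. Either way the number of calls is $O(N)$, which is what ``linear'' means here.

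The one subtlety — and the step I expect to be the main obstacle — is confirming that \textsc{MapNodes} cannot be invoked with a pair whose first coordinate is already in $S_\varphi$ in the $\top$-branch, because this relies on a global invariant about how bags, collections and the mapping interact: namely that at all times a node is \emph{either} in $S_\varphi$ \emph{or} attached (via $p$) to exactly one bag or collection, never both. This invariant has to be checked against every place that mutates these structures: the initial single bag, the \textbf{Depth} / \textbf{Parents and children signature} / \textbf{Equivalence class} / \textbf{Labels} refinement steps (which only repartition the $P_i,Q_j$ without touching $\varphi$), Deduction Rules~\ref{rule:mapping}, \ref{rule:collection}, \ref{rule:attr} (rule~\ref{rule:mapping} deletes a singleton bag exactly when its node enters $S_\varphi$ via \textsc{MapNodes}; rules~\ref{rule:collection},~\ref{rule:attr} only move $P_i$'s between collections and bags), and crucially \textsc{SplitChildren} (which replaces a bag/collection slot holding some children by a finer partition of the \emph{same} node sets minus those already mapped). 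Granting that invariant, the recursive guard $(\parent(u),\parent(v))\notin\mathcal G_\varphi$ and the bag-singleton trigger of Rule~\ref{rule:mapping} together ensure the first coordinate is fresh, and the counting argument above closes the proof.
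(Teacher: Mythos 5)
Your argument is correct and is essentially the paper's own proof: the paper's one-line justification is that each call to \textsc{MapNodes} reduces by one the number of nodes remaining to be mapped, which is exactly your observation that each $\top$-branch call adds a fresh pair to $\varphi$. Your additional care about the single possible $\bot$-returning call and the ``mapped or in exactly one bag/collection'' invariant fills in details the paper leaves implicit, but does not change the approach.
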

\begin{proof}
Each call to \textsc{MapNodes} strictly reduces by one, in each tree, the number of nodes remaining to be mapped -- and thus present among the bags and collections. As a result, \textsc{MapNodes} cannot be called more times than the total number of nodes -- including the recursive calls of \textsc{MapNodes} on the parents.
\end{proof}

It is important to note, however, that this does not guarantee the overall linearity of the algorithm. Indeed, the complexity of a call to \textsc{MapNodes} depends on the number of deductions that will be made, notably though the \textsc{SplitChildren} procedure. 

Nevertheless, it seems that this variation regarding the deductions is compensated globally, since experimentally, as shown in Fig.~\ref{fig:comptime_ok}, in the case $T_1\sim T_2$, it appears quite clearly that the total computation time for the preprocessing phase is linear in the size of the trees. In the case $T_1\not\sim T_2$, the algorithm allows to conclude negatively in a sublinear time on average -- as shown in Fig.~\ref{fig:comptime_nok}.

\begin{figure}[H]
\centering

\begin{minipage}{0.5\textwidth}
\begin{subfigure}[b]{\textwidth}
\includegraphics[width=\textwidth]{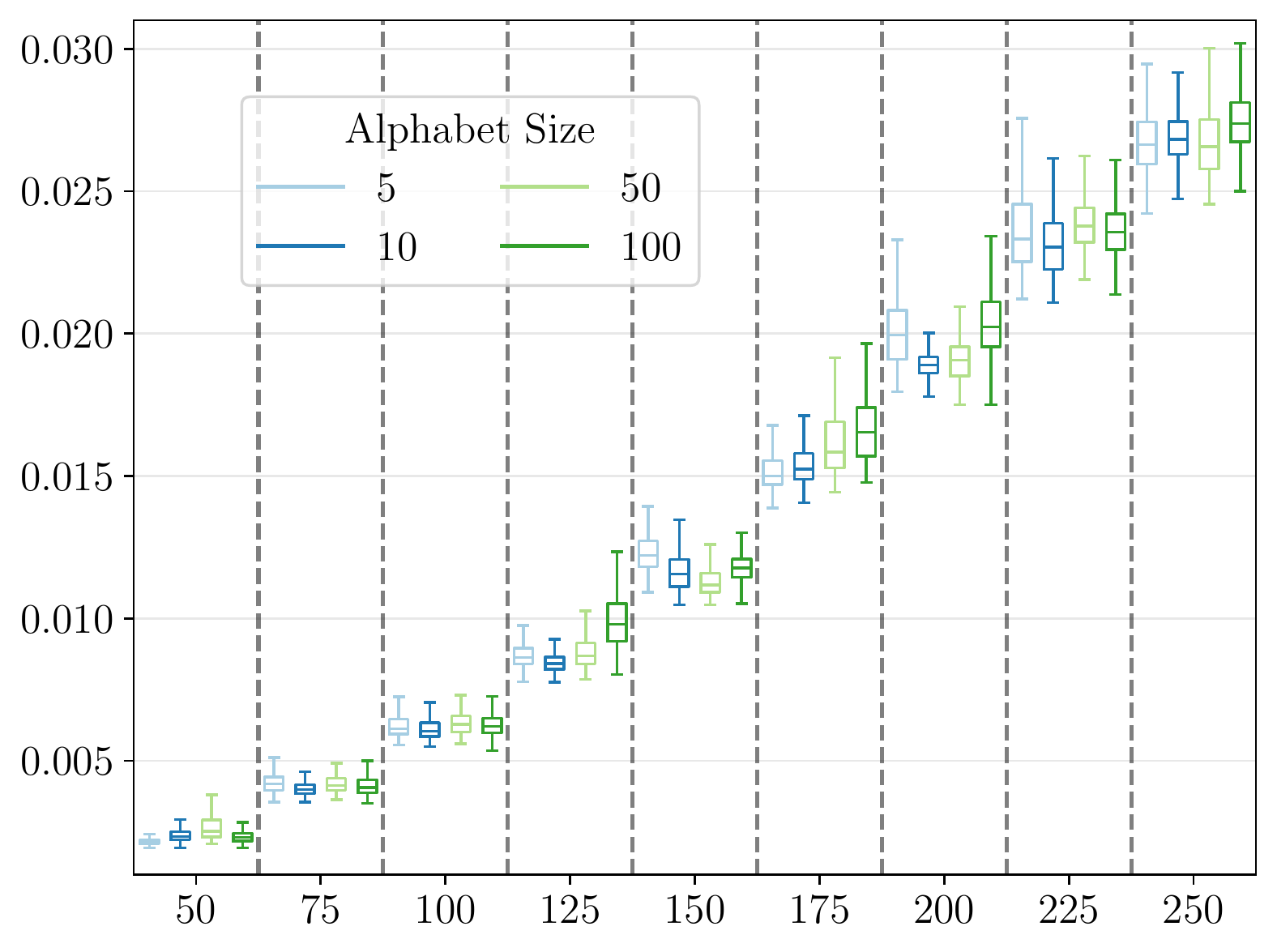}
\caption{\small Computation time when $T_1\sim T_2$.}
\label{fig:comptime_ok}
\end{subfigure}
\end{minipage}\hfill
\begin{minipage}{0.5\textwidth}
\begin{subfigure}[b]{\textwidth}
\includegraphics[width=\textwidth]{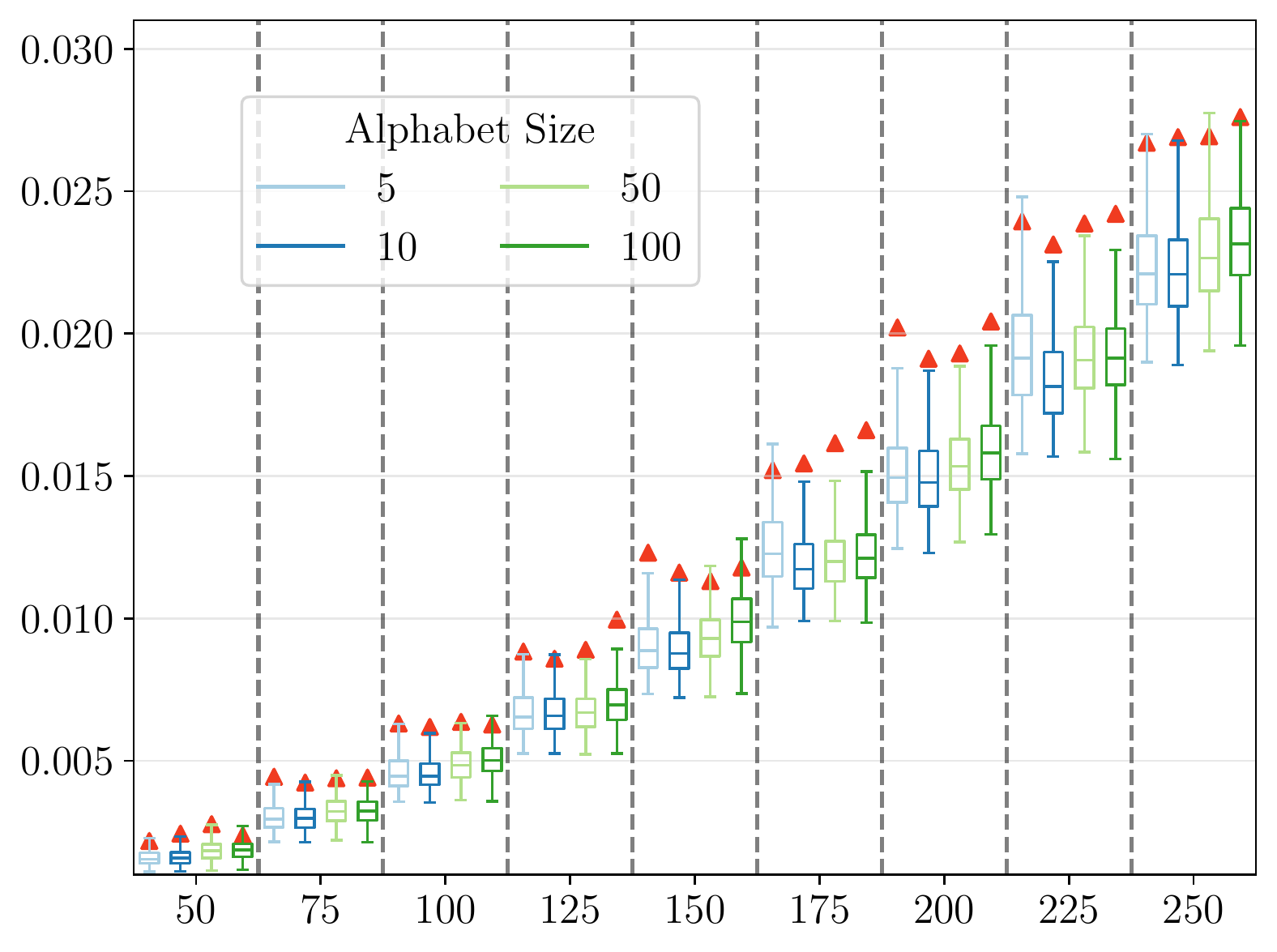}
\caption{\small Computation time when $T_1\not\sim T_2$.}
\label{fig:comptime_nok}
\end{subfigure}
\end{minipage}

\caption{\small Computation time (in s) for the execution of the algorithm of Subsection~\ref{ss:preprocessing}, according to the size of the considered trees. The different sizes of alphabet are displayed with different colors. In Fig.~\ref{fig:comptime_nok}, the red triangles indicate the average value of the corresponding computation time in the case $T_1\sim T_2$ (estimated from Fig.~\ref{fig:comptime_ok})}
\label{fig:time}
\end{figure}

\subsection{The algorithm reduces the complexity by an exponential factor on average}\label{ss:exp}
At any moment during the execution of the algorithm, given $\bags$ and $\collections$, we can deduce the current size of the search space. Indeed, for each bag $B$, there are $(\#B)!$ ways to map the nodes between them (not all of them necessarily leading to a tree isomorphism); for a collection $C$ and for given $n$, there are $(\#C(n))!$ ways to create bags, each giving $n!^{\#C(n)}$ possible mappings. The overall number of mappings associated to $C(n)$ is then given by $(n!)^{\#C(n)}(\#C(n))!$. Let us define the size of the current search space as
\[
N(\bags,\collections)=\prod_{B\in \bags} (\#B)! \prod_{C\in \collections} \left(\prod_n (n!)^{\#C(n)} (\#C(n))!\right)\]
Applying the deduction rules does not reduce this number at first sight -- since we transform into bags collections with $\#C(n)=1$ and we map nodes when $\#B=1$. On the other hand, each call to \textsc{SplitChildren} reduces this number. Indeed, for each bag or collection where a child of the mapped nodes appears, this object is divided into two parts, breaking the associated factorial:
\begin{itemize}
\item A bag with $(p+q)$ elements cut into two bags of size $p$ and $q$ reduces the size of the space by a factor of $\binom{p+q}{p}$.
\item An element of $C(p+q)$ cut into two elements of size $p$ and $q$ induces that $\#C(n)$ decreases by 1, and both $\#C(p)$ and $\#C(q)$ increase by 1. Overall, the size of the search space is modified by a factor of $\binom{p+q}{p}\frac{\#C(p+q)}{(\#C(p)+1)(\#C(q)+1)}$.
\end{itemize}
Each filter during the execution of the algorithm, that consists in splitting each bag into several ones has also the same effect on the overall cardinality. We can measure the evolution of the size of the search space by looking at the log-ratio $r(\bags,\collections)$, defined as follows -- with $N_\equiv(T_1)$ as in (\ref{eq:nb_isom}):

\[ r(\bags,\collections) = \log_{10} \frac{N(\bags,\collections)}{N_\equiv(T_1)}\]
The search space is reduced if and only if $r(\bags,\collections)$ is a negative number. It should be noted that we start the algorithm with a space size of $(\#T_1)!$, i.e. much more than $N_\equiv(T_1)$: the initial log-ratio is then positive. Note that despite having an initial search space bigger than $\isom(T_1,T_2)$, the algorithm cannot build a bijection that is not a tree isomorphism. The first topological filters (depth, parents, equivalence class) bring the log-ratio close to $0$ -- as illustrated in Fig.~\ref{fig:permutations} with 500 replicates of random trees of size 100 and an alphabet of size 5.

In more details, if we denote by $r_{\text{final}}(\bags,\collections)$ the log-ratio after the last filter on labels, Fig.~\ref{fig:logratio} provides a closer look at the results, and we can see that apart from pathological exceptions obtained with small trees, the log-ratio is always a negative number, so the algorithm does reduce the search space.

\noindent
\begin{minipage}[t]{0.5\textwidth}
\captionsetup{width=0.8\textwidth}
\includegraphics[width=\textwidth]{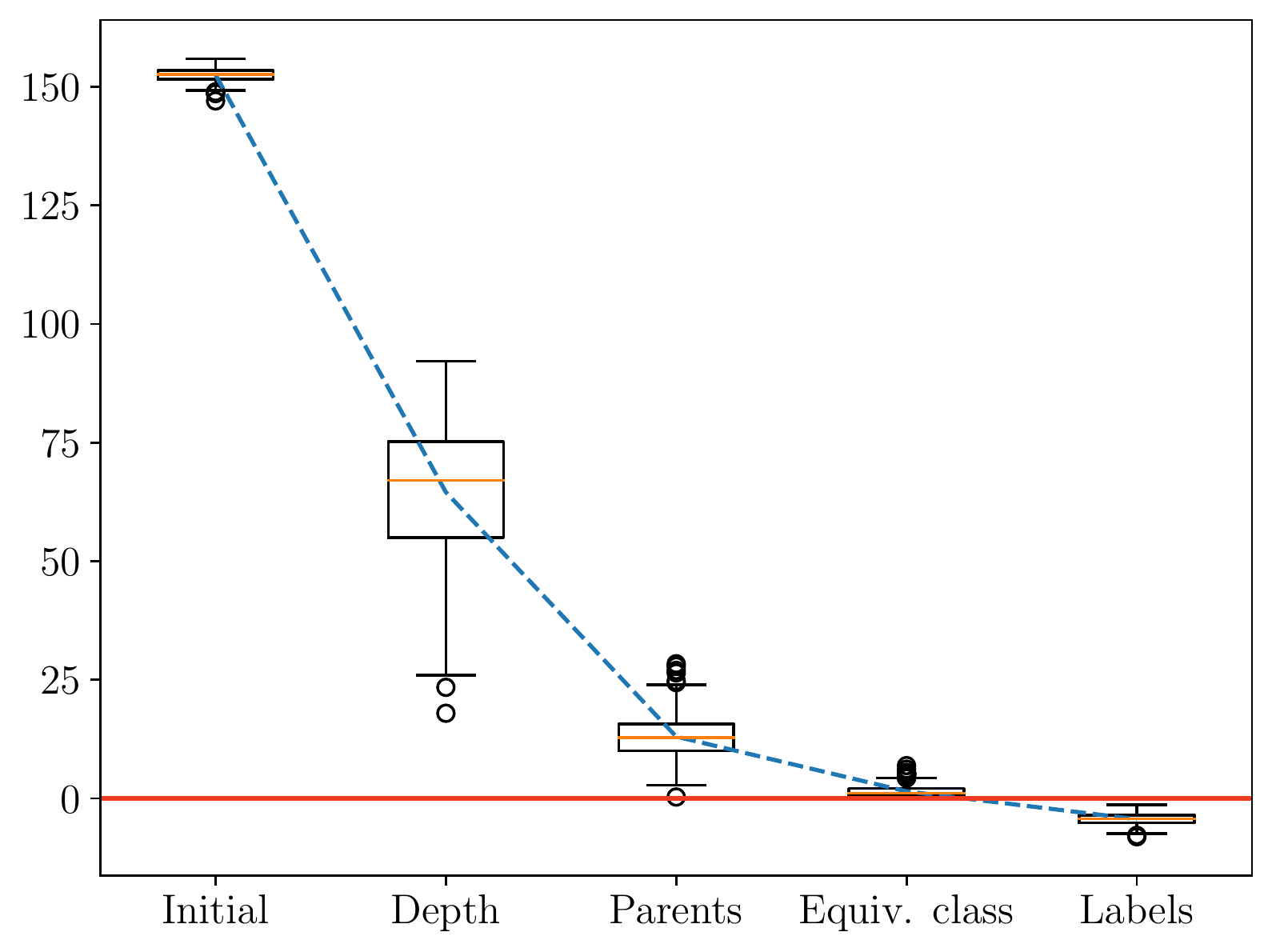}
\captionof{figure}{\small Evolution of $r(\bags,\collections)$ when $T_1\sim T_2$.}
\label{fig:permutations}
\end{minipage}\hfill
\begin{minipage}[t]{0.5\textwidth}
\includegraphics[width=\textwidth]{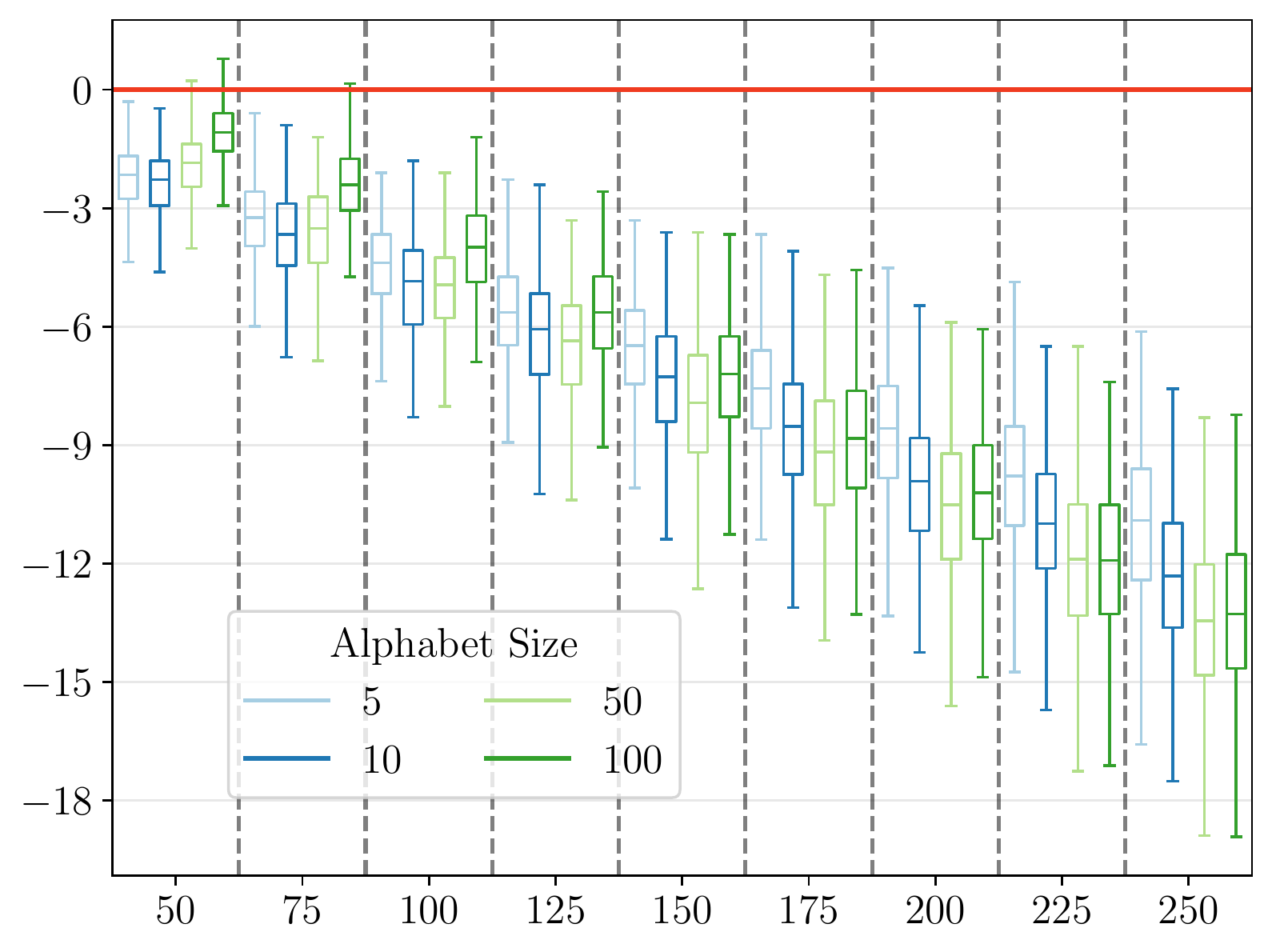}
\captionof{figure}{\small $r_{\text{final}}(\bags,\collections)$ when $T_1\sim T_2$,  according to the size of the considered trees. The different sizes of alphabet are displayed with different colors.}
\label{fig:logratio}
\end{minipage}

\medskip
\noindent
\textbf{As a conclusion,} we observe that the search space is reduced on average of an exponential factor and that this factor seems linear in the size of the tree. In other words, it seems that the larger the trees considered, the more exponentially the search space is reduced -- which is a remarkable property and justifies the interest of our method, especially given its low computational cost. 

\medskip
\noindent
\textbf{Implementation}  The algorithm presented in this paper has been implemented as a module of the Python library \verb=treex= \cite{azais2019treex}.

\medskip
\noindent
\textbf{Acknowledgements} The authors would like to thank three anonymous reviewers for their valuable comments on the first version of this manuscript.

\bibliographystyle{splncs04}
\bibliography{article}

\appendix

\section{Proof of Theorem~\ref{th:equiv_relation}}\label{annex:proof}

We begin with some preliminary reminders. Let $R$ be a relation over sets $E$ and $F$. $R$ is a bijection if and only if $\forall x\in E, \exists ! y\in F, x\mathrel{R}y$ and $\forall y\in E, \exists ! x\in E, x\mathrel{R}y$.

Let $R$ be a relation over sets $E$ and $F$; the converse relation $R^{-1}$ over sets $F$ and $E$ is defined as $y \mathrel{R^{-1}} x \iff x\mathrel{R}y$. If $R$ is a bijection, then so is $R^{-1}$.

Let $R$ be a relation over sets $E$ and $F$; and $S$ a relation over sets $F$ and $G$. The composition of $R$ and $S$, denoted by $S\circ R$, is a relation over $E$ and $G$, and defined as  $x\mathrel{(S\circ R)}z \iff \exists y\in F, (x\mathrel{R}y) \wedge (y\mathrel{S}z)$. If $R$ and $S$ are bijections, then so is $S\circ R$.

We now begin the proof. Let $T_1, T_2$ and $T_3$ be trees such that $T_1\tf{\varphi} T_2$ and $T_2\tf{\psi} T_3$. It should be clear that trivially, $T_1\tf{\id} T_1$. We aim to prove the following:

\[ T_1\tf{\psi\circ\varphi} T_3 \text{ and } T_2 \tf{\varphi^{-1}} T_1.\]

First of all, it is trivial that $\psi\circ \varphi\in \isom(T_1,T_3)$. The proof then follows directly from the reminders above and the two following lemmas:

\begin{lemma}
$R_{\psi \circ \varphi} = R_\psi \circ \mathrel{R_\varphi}$.
\end{lemma}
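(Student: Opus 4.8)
The statement equates two relations over $\attr(T_1)$ and $\attr(T_3)$, so the natural route is a double inclusion proven by unwinding definitions. I would fix arbitrary $x\in\attr(T_1)$ and $z\in\attr(T_3)$ and chase the logical equivalences. By definition of $R_{\psi\circ\varphi}$, we have $x\mathrel{R_{\psi\circ\varphi}} z$ iff there is a node $u\in T_1$ with $\overline{u}=x$ and $\overline{(\psi\circ\varphi)(u)}=z$. By definition of relational composition, $x\mathrel{(R_\psi\circ R_\varphi)} z$ iff there is some $y\in\attr(T_2)$ with $x\mathrel{R_\varphi} y$ and $y\mathrel{R_\psi} z$, which in turn means there are nodes $u\in T_1$ with $\overline{u}=x,\ \overline{\varphi(u)}=y$, and $u'\in T_2$ with $\overline{u'}=y,\ \overline{\psi(u')}=z$.

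For the inclusion $R_{\psi\circ\varphi}\subseteq R_\psi\circ R_\varphi$: given a witness $u$ for the left side, set $y=\overline{\varphi(u)}$ and take $u'=\varphi(u)$. Then $\overline{u'}=y$ gives $x\mathrel{R_\varphi} y$ via $u$, and $\overline{\psi(u')}=\overline{\psi(\varphi(u))}=\overline{(\psi\circ\varphi)(u)}=z$ gives $y\mathrel{R_\psi} z$ via $u'$; hence $x\mathrel{(R_\psi\circ R_\varphi)} z$. For the reverse inclusion, this is where a small subtlety appears: the composition only gives us \emph{some} $u\in T_1$ with $\overline{u}=x,\ \overline{\varphi(u)}=y$ and \emph{some possibly different} $u'\in T_2$ with $\overline{u'}=y,\ \overline{\psi(u')}=z$, and a priori $u'\neq\varphi(u)$. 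To conclude $x\mathrel{R_{\psi\circ\varphi}} z$ we need a single node of $T_1$ witnessing the whole chain, i.e. we must be able to replace $u'$ by $\varphi(u)$. This is exactly where the hypothesis $T_1\tf{\varphi}T_2$ is used rather than just $\varphi\in\isom(T_1,T_2)$: since $R_\varphi$ is a bijection, it induces the substitution cipher $f_\varphi$ with $f_\varphi(\overline{u})=\overline{\varphi(u)}$ for all $u$, so $\overline{\varphi(u)}=f_\varphi(x)$ and $\overline{u'}=f_\varphi(\overline{u'})^{\text{pre}}$... more cleanly: $\overline{u'}=y=\overline{\varphi(u)}$ already as values, so $\overline{\psi(\varphi(u))}=f_\psi(\overline{\varphi(u)})=f_\psi(y)=f_\psi(\overline{u'})=\overline{\psi(u')}=z$, using that $R_\psi$ is also a bijection (from $T_2\tf{\psi}T_3$) and hence $\overline{\psi(\cdot)}$ depends only on the label. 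Thus $u$ itself witnesses $x\mathrel{R_{\psi\circ\varphi}} z$.

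I expect the only real obstacle to be articulating cleanly why the ``middle'' nodes $u'$ and $\varphi(u)$ can be identified for the purposes of the label relation — namely, that once $R_\varphi$ (resp. $R_\psi$) is a bijection, the label of the image of a node depends only on the label of the node, so any two nodes of $T_2$ with the same label are interchangeable in the relation $R_\psi$. Everything else is a routine definition chase and can be written in a few lines. One should state explicitly at the outset that the hypotheses $T_1\tf{\varphi}T_2$ and $T_2\tf{\psi}T_3$ supply the functions $f_\varphi$ and $f_\psi$, which is what makes the argument go through; with mere tree isomorphisms the equality of relations would still hold, but the clean substitution-cipher phrasing is convenient here and is what is actually needed downstream to conclude $R_{\psi\circ\varphi}$ is a bijection.
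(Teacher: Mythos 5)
Your proof is correct and follows essentially the same route as the paper's: a definition chase gives the forward inclusion, and for the reverse inclusion you correctly identify that the witness $u'$ for $y\mathrel{R_\psi}z$ need not equal $\varphi(u)$, repairing this by using that $R_\psi$ is single-valued — the paper does exactly the same thing, only more tersely, when it deduces $\overline{\psi(v)}=z$ from $y\mathrel{R_\psi}\overline{\psi(v)}$ together with $y\mathrel{R_\psi}z$. One aside in your closing paragraph is wrong, though: the identity does \emph{not} hold for mere tree isomorphisms. Take $T_1$ a root with two children labeled $X$ and $W$, $T_2$ a root with two children both labeled $Y$, $T_3$ a root with two children labeled $Z_1$ and $Z_2$ (all three roots sharing one common label), with the obvious isomorphisms; then $R_\psi\circ R_\varphi$ relates $X$ to both $Z_1$ and $Z_2$, while $R_{\psi\circ\varphi}$ relates $X$ to only one of them. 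So the ciphering hypothesis on $\psi$ (at least the functionality of $R_\psi$) is genuinely needed, exactly as your main argument uses it — you should delete the claim that it is dispensable.
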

\begin{proof}
Let $x\in\attr(T_1)$ and $z\in \attr(T_3)$. It suffices to show \[x\mathrel{R_{\psi \circ \varphi}}z \iff \exists y \in \attr(T_2), x\mathrel{R_\varphi} y \wedge y\mathrel{R_\psi} z.\]
\begin{enumerate}
\item[$\implies$] There exists $u\in T_1$ so that $x=\overline{u}$ and $z=\overline{(\psi\circ \varphi)(u)}$. Let $v=\varphi(u)$ and $y=\overline{v}$; then $\overline{u}\mathrel{R_\varphi}\overline{v}$, so $x\mathrel{R_\varphi} y$; similarly $\overline{v}\mathrel{R_\psi} \overline{\psi(v)}$ leads to $y\mathrel{R_\psi }z$.
\item[$\impliedby$] There exists $u\in T_1$ so that $\overline{u}=x$ and $y=\overline{\varphi(u)}$. Let $v=\varphi(u)$. As $y\mathrel{R_\psi}\overline{\psi(v)}$, then $\overline{\psi(v)}=z$ and it follows $x\mathrel{R_{\psi\circ \varphi}}z$.
\end{enumerate}
\end{proof}

\begin{lemma}
$\mathrel{R_\varphi}^{-1}=R_{\varphi^{-1}}$.
\end{lemma}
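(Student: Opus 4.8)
The plan is to unwind both relations to their set-theoretic definitions and show the two defining predicates coincide. Recall that $x \mathrel{R_\varphi} y$ means $\exists u\in T_1,\ (x=\overline{u})\wedge(y=\overline{\varphi(u)})$, so by the definition of the converse relation, $y \mathrel{R_\varphi^{-1}} x$ holds if and only if $\exists u\in T_1,\ (x=\overline{u})\wedge(y=\overline{\varphi(u)})$. On the other side, since $\varphi : T_1\to T_2$ is a bijection, $\varphi^{-1} : T_2\to T_1$ is well defined, and $y \mathrel{R_{\varphi^{-1}}} x$ means $\exists v\in T_2,\ (y=\overline{v})\wedge(x=\overline{\varphi^{-1}(v)})$.

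First I would prove the forward implication: assume $y \mathrel{R_\varphi^{-1}} x$, so there is $u\in T_1$ with $x=\overline{u}$ and $y=\overline{\varphi(u)}$. Setting $v=\varphi(u)\in T_2$, we have $y=\overline{v}$ and, since $\varphi^{-1}(v)=\varphi^{-1}(\varphi(u))=u$, also $x=\overline{u}=\overline{\varphi^{-1}(v)}$. Hence $y\mathrel{R_{\varphi^{-1}}}x$. Conversely, assume $y\mathrel{R_{\varphi^{-1}}}x$, so there is $v\in T_2$ with $y=\overline{v}$ and $x=\overline{\varphi^{-1}(v)}$. Setting $u=\varphi^{-1}(v)\in T_1$ gives $x=\overline{u}$ and, since $\varphi(u)=\varphi(\varphi^{-1}(v))=v$, also $y=\overline{v}=\overline{\varphi(u)}$. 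Hence $y\mathrel{R_\varphi^{-1}}x$. Since the two predicates hold for exactly the same pairs $(x,y)$, the relations are equal.

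There is really no main obstacle here: the argument is a direct symmetry computation, the only point requiring care being that $\varphi$ is a bijection (established in Definition~\ref{def:isomtree} and used implicitly throughout), so that $\varphi^{-1}$ exists and the substitutions $u=\varphi^{-1}(v)$ and $v=\varphi(u)$ are legitimate. With this lemma in hand, and the earlier lemma $R_{\psi\circ\varphi}=R_\psi\circ R_\varphi$ together with the preliminary reminders that converses and composites of bijections are bijections, one concludes: $R_{\varphi^{-1}}=R_\varphi^{-1}$ is a bijection because $R_\varphi$ is, so $T_2\tf{\varphi^{-1}}T_1$; and $R_{\psi\circ\varphi}=R_\psi\circ R_\varphi$ is a bijection because $R_\varphi$ and $R_\psi$ are, so $T_1\tf{\psi\circ\varphi}T_3$. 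Combined with the trivial $T_1\tf{\id}T_1$, this shows $\sim$ is reflexive, symmetric and transitive, proving Theorem~\ref{th:equiv_relation}.
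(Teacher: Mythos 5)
Your proof is correct and follows essentially the same route as the paper's: unwind both relations to their existential definitions and check the two implications by setting $v=\varphi(u)$ and $u=\varphi^{-1}(v)$ respectively. The only difference is cosmetic — you phrase the equivalence as $y\mathrel{R_\varphi^{-1}}x \iff y\mathrel{R_{\varphi^{-1}}}x$ where the paper writes $x\mathrel{R_\varphi}y \iff y\mathrel{R_{\varphi^{-1}}}x$, which is the same statement by definition of the converse relation.
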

\begin{proof}
Let $x\in\attr(T_1)$ and $y\in\attr(T_2)$. It suffices to show $x\mathrel{R_\varphi} y \iff y\mathrel{R_{\varphi^{-1}}}x$.
\begin{enumerate}
\item[$\implies$] There exists $u\in T_1$ so that $x=\overline{u}$ and $y=\overline{\varphi(u)}$. Let $v=\varphi(u)$. Since $u=\varphi^{-1}(v)$, $y\mathrel{R_{\varphi^{-1}}}x$.
\item[$\impliedby$] There exists $v\in T_2$ so that $\overline{v}=y$ and $x=\overline{\varphi^{-1}(v)}$. Let $u=\varphi^{-1}(v)$. Since $v=\varphi(u)$, $x\mathrel{R_\varphi} y$.
\end{enumerate}
\end{proof}

\section{Example of execution of the algorithm of Section~\ref{sec:break}}\label{annex:example}

We illustrate here the algorithm presented in Section~\ref{sec:break} on an example, namely the trees of Fig.~\ref{fig:example}. In addition to detailed explanations for each filter operation, 
a summary of the process can be found in Table~\ref{tab:example} at the end of this section.

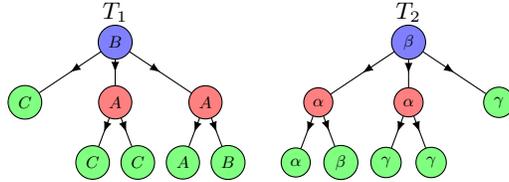
\begin{figure}[H]
\begin{minipage}{0.55\textwidth}
\def\xscale{0.3}
\def\yscale{0.4}
\def\nodescale{0.7}

\begin{tikzpicture}[xscale=\xscale,yscale=\yscale]
\tikzstyle{noeud}=[draw,circle,fill=white,scale=\nodescale*1]
\tikzstyle{attribut}=[scale=\nodescale*1,font=\bf]
\tikzstyle{arc}=[->-,>=latex]
\tikzstyle{fleche}=[->,>=latex,red,thick]

\node[noeud,fill=lblue] (u1) at (2,2) {$B$};
\node[noeud,fill=lred] (u2) at (6,0) {$A$};
\node[noeud,fill=lred] (u3) at (2,0) {$A$};
\node[noeud,fill=lgreen] (u4) at (5,-2) {$A$};
\node[noeud,fill=lgreen] (u5) at (7,-2) {$B$};
\node[noeud,fill=lgreen] (u6) at (1,-2) {$C$};
\node[noeud,fill=lgreen] (u7) at (3,-2) {$C$};
\node[noeud,fill=lgreen] (u8) at (-2,0) {$C$};

\node (t1) at (2,3) {$T_1$};

\draw[arc] (u1)--(u2) ;
\draw[arc] (u1)--(u3) ;
\draw[arc] (u1)--(u8) ;
\draw[arc] (u2)--(u4) ;
\draw[arc] (u2)--(u5) ;
\draw[arc] (u3)--(u6) ;
\draw[arc] (u3)--(u7) ;

\def\xshift{13}

\node[noeud,fill=lblue] (v1) at (2+\xshift,2) {$\beta$};
\node[noeud,fill=lred] (v2) at (-2+\xshift,0) {$\alpha$};
\node[noeud,fill=lred] (v3) at (2+\xshift,0) {$\alpha$};
\node[noeud,fill=lgreen] (v4) at (-3+\xshift,-2) {$\alpha$};
\node[noeud,fill=lgreen] (v5) at (-1+\xshift,-2) {$\beta$};
\node[noeud,fill=lgreen] (v6) at (1+\xshift,-2) {$\gamma$};
\node[noeud,fill=lgreen] (v7) at (3+\xshift,-2) {$\gamma$};
\node[noeud,fill=lgreen] (v8) at (6+\xshift,0) {$\gamma$};

\node (t1) at (2+\xshift,3) {$T_2$};

\draw[arc] (v1)--(v2) ;
\draw[arc] (v1)--(v3) ;
\draw[arc] (v1)--(v8) ;
\draw[arc] (v2)--(v4) ;
\draw[arc] (v2)--(v5) ;
\draw[arc] (v3)--(v6) ;
\draw[arc] (v3)--(v7) ;

\end{tikzpicture}
\end{minipage}~
\begin{minipage}{0.45\textwidth}
\caption{\small Two topologically isomorphic labeled trees $T_1$ and $T_2$. The color on nodes indicates the classes of equivalence of nodes under $\equiv$. Nodes are numbered from $u_1$ to $u_8$ in $T_1$ (resp. $v_1$ to $v_8$ in $T_2$) in breadth-first search order. As in Fig.~\ref{fig:cardequiv}, $N_\equiv(T_1)=8$.}
\label{fig:example}
\end{minipage}
\end{figure}

\medskip
\noindent
\textbf{Initialisation} We set $\varphi$ and $f$ as empty bijections and we create a single bag $B=\begin{pmatrix} u_1, \dots, u_8 \\ v_1,\dots, v_8\end{pmatrix}$. At this step, using the notation defined in Subsection~\ref{ss:exp}, $N(\bags,\collections)=8!=40,320$.

\medskip
\noindent
\textbf{Depth} We partition $B$ by considering the depth of the nodes. Since $\depth(T_1)=\depth(T_2)=2$, we create the following bags: 
\[B_0 = \begin{pmatrix} u_1\\ v_1\end{pmatrix}, B_1 = \begin{pmatrix} u_2, u_3, u_4\\ v_2, v_3, v_4\end{pmatrix} \text{ and } B_2 = \begin{pmatrix} u_5, u_6, u_7, u_8\\ v_5, v_6, v_7, v_8\end{pmatrix}.\]
Applying Deduction Rule~\ref{rule:mapping}, we call \textsc{MapNodes}$(u_1,v_1, \varphi,f)$ and delete $B_0$. Since the children of $u_1$ and $v_1$ already form a bag, the \textsc{SplitChildren} procedure does not divide any bags. After this step, we have $N(\bags,\collections) = 3! \times 4! = 144$, hence a reduction of the remaining space by a factor 280.

\medskip
\noindent
\textbf{Parents and children signature} Since the elements of bag $B_1$ all share the same parent, nothing happens here. However, let us look at bag $B_2$. We define the following sets $S_1(u_3)=\lbrace u_5, u_6\rbrace$, $S_1(u_4)=\lbrace u_7, u_8)$, $S_2(v_2) = \lbrace v_5, v_6\rbrace$ and $S_2(v_3)=\lbrace v_7, v_8\rbrace$. It appears that all those parents $u_3, u_4, v_2$ and $v_3$ have the same children signature $s = \lbrace \node[green], \node[green]\rbrace$. Therefore, the bag $B_2$ is rebuilt identically.

\medskip
\noindent
\textbf{Equivalence class under $\equiv$} The nodes of $B_2$ all share the same equivalence class \node[green] so the bag remains still. On the other hand, bag $B_1$ is splitted into 
\[ B_{\node[green]}=\begin{pmatrix} u_2\\v_4\end{pmatrix} \text{ and } B_{\node[red]} =\begin{pmatrix} u_3,u_4\\ v_2,v_3\end{pmatrix}.\]
Applying Deduction Rule~\ref{rule:mapping}, we call \textsc{MapNodes}$(u_2, v_4,\varphi, f)$ and delete bag $B_{\node[green]}$. Since the mapped nodes are leaves, there are no children to split, and their parents are already mapped. We then have $N(\bags,\collections) = 48$ and the remaining space has been reduced by 3.

\medskip
\noindent
\textbf{Labels} Here is what happens to each of the remaining bags:
\begin{description}
\item[$B_{\node[red]}$:] We create two sets $S_1(A) = \lbrace u_3,u_4 \rbrace$ and $S_2(\alpha) =\lbrace v_2,v_3\rbrace$. Since Deduction Rule~\ref{rule:attr} cannot be applied, we create a collection $C$ with $C(2) = \begin{pmatrix} \lbrace u_3, u_4 \rbrace \\ \lbrace v_2, v_3 \rbrace\end{pmatrix}$.
\item[$B_2$:] We create the following sets: $S_1(A) = \lbrace u_7\rbrace$, $S_1(B) = \lbrace u_8\rbrace$, $S_1(C) = \lbrace u_5, u_6\rbrace$; $S_2(\alpha) = \lbrace  v_5\rbrace$, $S_2(\beta) = \lbrace v_6\rbrace$ and $S_2(\gamma) = \lbrace v_7, v_8\rbrace$. Deduction Rule~\ref{rule:attr} allows to create bags $B_\beta = (S_1(B), S_2(\beta))$ and $B_\gamma = (S_1(C), S_2(\gamma))$. Finally, we create a collection $C'$ with $C'(1) = (S_1(A), S_1(\alpha))$.
\end{description}

After this step, we have the following bags and collections:

\[ B_\beta = \begin{pmatrix} u_8 \\ v_6\end{pmatrix}, B_\gamma = \begin{pmatrix} u_5, u_6 \\ v_7, v_8\end{pmatrix}, C : 2\mapsto \begin{pmatrix} \lbrace u_3, u_4 \rbrace \\ \lbrace v_2, v_3 \rbrace\end{pmatrix}\\ \text{and } C' : 1 \mapsto \begin{pmatrix} \lbrace u_7 \rbrace \\ \lbrace v_5 \rbrace\end{pmatrix}\]
where for collections, only the integers for which $\#C(n)>0$ are given. Applying Deduction Rule~\ref{rule:collection}, $C$ and $C'$ are deleted since $\#C(2)=1$ and $\#C'(1)=1$. We call \textsc{ExtBij}$(A,\alpha,f)$ and then the bags are:

\[ B_\beta = \begin{pmatrix} u_8 \\ v_6\end{pmatrix}, B_\gamma = \begin{pmatrix} u_5, u_6 \\ v_7, v_8\end{pmatrix}, B_C = \begin{pmatrix} u_3, u_4 \\ v_2, v_3 \end{pmatrix}\text{and } B_{C'} =\begin{pmatrix}  u_7  \\  v_5 \end{pmatrix}.\]

Applying Deduction Rule~\ref{rule:mapping}, we call \textsc{MapNodes}$(u_8,v_6, \varphi,f)$, therefore their parents must be mapped and we call \textsc{MapNodes}$(u_4,v_2,\varphi,f)$. $B_C$ is reduced to $(u_3,v_2)$. Applying Deduction Rule~\ref{rule:mapping} to $B_C$ and $B_{C'}$ maps $u_3$ with $v_2$ and $u_7$ with $v_5$. In the end, only $B_\gamma$ remains and therefore $N(\bags,\collections)=2!=2$, hence a reduction of a factor 24 of the remaining space.

The algorithm stops there; Fig.~\ref{fig:example_fin} illustrates the state of the bijections $\varphi$ and $f$ at the end of the execution.

\begin{table}
\centering
\begin{tabular}{c|c|c|c|c|c}
Filter & $\varphi$ & $f$ & $\bags$ & $\collections$ & $N(\bags,\collections)$ \\
\hline
Inititial & $\emptyset \mapsto \emptyset$ & $\emptyset \mapsto \emptyset$ & $\begin{pmatrix} u_1, u_2, u_3, u_4, u_5, u_6, u_7, u_8 \\
v_1, v_2, v_3, v_4, v_5, v_6, v_7, v_8
\end{pmatrix}$ & $\emptyset$ & $8!$ \\
\hline
Depth & $\begin{cases}u_1\mapsto v_1\end{cases}$ & $\begin{cases}B\mapsto \beta\end{cases}$  & $\begin{pmatrix} u_2, u_3, u_4\\
v_2, v_3, v_4
\end{pmatrix}$ ; $\begin{pmatrix}u_5, u_6, u_7, u_8 \\
v_5, v_6, v_7, v_8
\end{pmatrix}$ & $\emptyset$ & $3! \times 4!$\\
\hline
Parents &  \multicolumn{5}{c}{No changes}  \\
\hline
Equiv. class & $\begin{cases} u_1\mapsto v_1 \\ u_2\mapsto v_4
\end{cases}$ & $\begin{cases} B \mapsto \beta \\ C \mapsto \gamma
\end{cases} $ &$\begin{pmatrix} u_3, u_4\\
v_2, v_3
\end{pmatrix}$ ; $\begin{pmatrix}u_5, u_6, u_7, u_8 \\
v_5, v_6, v_7, v_8
\end{pmatrix}$ & $\emptyset$ & $2! \times 4!$ \\
\hline
Labels & $\begin{cases} u_1\mapsto v_1 \\ u_2\mapsto v_4\\ u_3\mapsto v_3\\ u_4\mapsto v_2\\ u_7\mapsto v_5\\ u_8 \mapsto v_6\end{cases}$ & $\begin{cases} B \mapsto \beta \\ C \mapsto \gamma \\ A \mapsto \alpha\end{cases} $ &  $\begin{pmatrix}u_5, u_6 \\
v_7, v_8
\end{pmatrix}$  & $\emptyset$ & $2!$ \\
\end{tabular}
\caption{\small Summary of the state of the different components of the problem at the end of each filter during the execution of the algorithm.}
\label{tab:example}
\end{table}

\begin{figure}[H]
\begin{minipage}{0.55\textwidth}
\centering
\def\xscale{0.3}
\def\yscale{0.5}
\def\nodescale{0.7}

\begin{tikzpicture}[xscale=\xscale,yscale=\yscale]
\tikzstyle{noeud}=[draw,circle,fill=white,scale=\nodescale*1]
\tikzstyle{attribut}=[scale=\nodescale*1,font=\bf]
\tikzstyle{arc}=[->-,>=latex]
\tikzstyle{fleche}=[->,>=latex,red,thick]

\node[noeud,draw=lblue, ultra thick] (u1) at (2,2) {$B$};
\node[noeud,draw=lgreen, ultra thick] (u2) at (6,0) {$A$};
\node[noeud,draw=lgreen, ultra thick] (u3) at (2,0) {$A$};
\node[noeud,draw=lgreen, ultra thick] (u4) at (5,-2) {$A$};
\node[noeud,draw=lblue, ultra thick] (u5) at (7,-2) {$B$};
\node[noeud] (u6) at (1,-2) {$C$};
\node[noeud] (u7) at (3,-2) {$C$};
\node[noeud,draw=lred, ultra thick] (u8) at (-2,0) {$C$};

\node (t1) at (2,4) {$T_1$};

\draw[arc] (u1)--(u2) ;
\draw[arc] (u1)--(u3) ;
\draw[arc] (u1)--(u8) ;
\draw[arc] (u2)--(u4) ;
\draw[arc] (u2)--(u5) ;
\draw[arc] (u3)--(u6) ;
\draw[arc] (u3)--(u7) ;

\node at (8.5,6) {$\varphi$};

\draw[ultra thick,dashed] (8.5,5.5)--(8.5,-3.5);

\def\xshift{13}

\node[noeud,draw=lblue, ultra thick] (v1) at (2+\xshift,2) {$\beta$};
\node[noeud,draw=lgreen, ultra thick] (v2) at (-2+\xshift,0) {$\alpha$};
\node[noeud,draw=lgreen, ultra thick] (v3) at (2+\xshift,0) {$\alpha$};
\node[noeud,draw=lgreen, ultra thick] (v4) at (-3+\xshift,-2) {$\alpha$};
\node[noeud,draw=lblue, ultra thick] (v5) at (-1+\xshift,-2) {$\beta$};
\node[noeud] (v6) at (1+\xshift,-2) {$\gamma$};
\node[noeud] (v7) at (3+\xshift,-2) {$\gamma$};
\node[noeud,draw=lred, ultra thick] (v8) at (6+\xshift,0) {$\gamma$};

\node (t1) at (2+\xshift,4) {$T_2$};

\draw[arc] (v1)--(v2) ;
\draw[arc] (v1)--(v3) ;
\draw[arc] (v1)--(v8) ;
\draw[arc] (v2)--(v4) ;
\draw[arc] (v2)--(v5) ;
\draw[arc] (v3)--(v6) ;
\draw[arc] (v3)--(v7) ;

\draw[-,draw=lblue, ultra thick] (u1)--(v1);
\draw[-,draw=lred, ultra thick] (u8) to [bend right=-45] (v8);
\draw[-,draw=lgreen, ultra thick] (u2)--(v2);
\draw[-,draw=lgreen, ultra thick] (u3) to [bend right=-15] (v3);
\draw[-,draw=lgreen, ultra thick] (u4) to [bend right=30] (v4);
\draw[-,draw=lblue, ultra thick] (u5) to [bend right=30] (v5);

\end{tikzpicture}
\end{minipage}~
\begin{minipage}{0.45\textwidth}
\caption{\small State of $\varphi$ (left) and $f$ (below) at the end of the algorithm.}
\label{fig:example_fin}
\centering
\def\xscale{0.3}
\def\yscale{0.5}
\def\nodescale{0.7}
\begin{tikzpicture}[xscale=\xscale,yscale=\yscale]
\tikzstyle{noeud}=[draw,circle,fill=white,scale=\nodescale*1]
\tikzstyle{attribut}=[scale=\nodescale*1,font=\bf]
\tikzstyle{arc}=[->-,>=latex]
\tikzstyle{fleche}=[->,>=latex,red,thick]

\def\xshift{0}

\node at (2+\xshift,4) {$f$};

\node[noeud,draw=lgreen, ultra thick] (A) at (\xshift,2) {$A$};
\node[noeud,draw=lblue, ultra thick] (B) at (2+\xshift,2) {$B$};
\node[noeud,draw=lred, ultra thick] (C) at (4+\xshift,2) {$C$};

\node[noeud,draw=lgreen, ultra thick] (a) at (\xshift,0) {$\alpha$};
\node[noeud,draw=lblue, ultra thick] (b) at (2+\xshift,0) {$\beta$};
\node[noeud,draw=lred, ultra thick] (c) at (4+\xshift,0) {$\gamma$};

\draw[-,draw=lblue, ultra thick] (B)--(b);
\draw[-,draw=lgreen, ultra thick] (A)--(a);
\draw[-,draw=lred, ultra thick] (C)--(c);

\end{tikzpicture}
\end{minipage}
\end{figure}
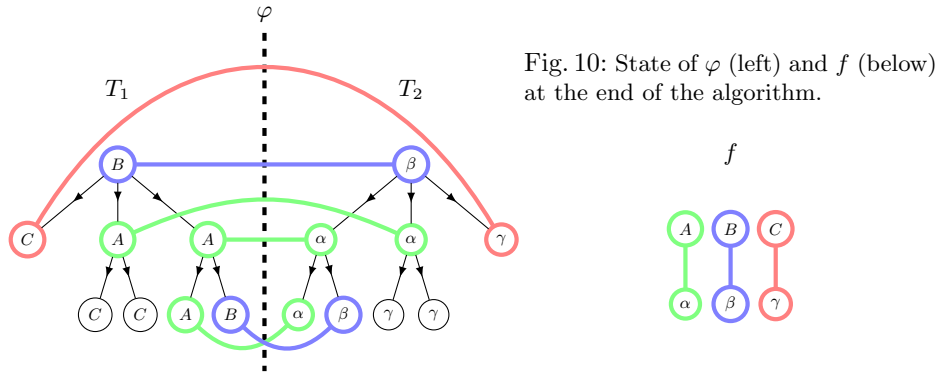

\end{document}